\documentclass[lettersize,journal]{IEEEtran}
\usepackage{amsmath,amsfonts}
\usepackage{algorithmic}
\usepackage{algorithm}
\usepackage{array}
\usepackage{textcomp}
\usepackage{stfloats}
\usepackage{url}
\usepackage{verbatim}
\usepackage{graphicx}
\usepackage{cite}
\usepackage{subcaption}
\newtheorem{theorem}{Theorem}[section]

\newtheorem{proof}{Proof}

\usepackage{balance}

\begin{document}

\title{Leaf-centric Logical Topology Design for OCS-based GPU Clusters}

\author{
  Xinchi Han\IEEEauthorrefmark{6},
  \and Weihao Jiang\IEEEauthorrefmark{6}, 
  \and Yingming Mao\IEEEauthorrefmark{2},
  \and Yike Liu\IEEEauthorrefmark{2}, 
  \and Zhuoran Liu\IEEEauthorrefmark{6},
  \and Yongxi Lv\IEEEauthorrefmark{6},
  \and Peirui Cao\IEEEauthorrefmark{3},
  \and Zhuotao Liu\IEEEauthorrefmark{4},
  \and Ximeng Liu\IEEEauthorrefmark{6},
  \and Xinbing Wang\IEEEauthorrefmark{6},
  \and Changbo Wu\IEEEauthorrefmark{7}, 
  \and Zihan Zhu\IEEEauthorrefmark{8}, 
  \and Wu Dongchao\IEEEauthorrefmark{5},
  \and Yang Jian\IEEEauthorrefmark{5},
  \and Zhang Zhanbang\IEEEauthorrefmark{5},
  \and Yuansen Chen\IEEEauthorrefmark{5}, 
  \and Shizhen Zhao\textsuperscript{\IEEEauthorrefmark{6},*}\thanks{* Shizhen Zhao is the corresponding author.}\\ 
  \IEEEauthorblockA{
    \IEEEauthorrefmark{6} \textit{Shanghai Jiao Tong University, Shanghai, China}\\ 
    \IEEEauthorrefmark{2} \textit{Xi'an Jiao Tong University, Xi'an, China} \\
    \IEEEauthorrefmark{3} \textit{Nanjing University, Nanjing, China} \\
    \IEEEauthorrefmark{4} \textit{Tsinghua University, Beijing, China} \\
    \IEEEauthorrefmark{7} \textit{University of Science and Technology of China, Hefei, China \& Shanghai Innovation Institute, Shanghai, China} \\
    \IEEEauthorrefmark{8} \textit{University of Electronic Science and Technology of China, Chengdu, China} \\
    \IEEEauthorrefmark{5} \textit{Huawei, Dongguan, China}
  } \\
  \IEEEauthorblockA{
    hanxinchi@sjtu.edu.cn, weihao.jiang@sjtu.edu.cn, mao1234@stu.xjtu.edu.cn, \\
    cn-lyk@stu.xjtu.edu.cn, cocopromenade-9@sjtu.edu.cn, shjdblgklyx2435@sjtu.edu.cn, \\
    caopeirui@nju.edu.cn, zhuotaoliu@tsinghua.edu.cn, liuximeng@sjtu.edu.cn, \\
    xwang8@sjtu.edu.cn, wuchangbo@mail.ustc.edu.cn, 2021080911004@std.uestc.edu.cn, \\
    wudongchao@huawei.com, yangjian227@huawei.com, zhangzhanbang1@huawei.com, \\
    chengyuansen1@huawei.com, shizhenzhao@sjtu.edu.cn 
  }
}

\maketitle
\begin{abstract}

Recent years have witnessed the growing deployment of optical circuit switches (OCS) in commercial GPU clusters (e.g., Google’s A3 GPU cluster) optimized for machine learning (ML) workloads. Such clusters adopt a three-tier leaf–spine–OCS topology: servers attach to leaf-layer electronic packet switches (EPSes); these leaf switches aggregate into spine-layer EPSes to form a Pod; and multiple Pods are interconnected via core-layer OCSes. Unlike EPSes, OCSes only support circuit-based paths between directly connected spine switches, potentially inducing a phenomenon termed routing polarization, which refers to the scenario where the bandwidth requirements between specific pairs of Pods are unevenly fulfilled through links among different spine switches. The resulting imbalance induces traffic contention and bottlenecks on specific leaf-to-spine links, ultimately reducing ML training throughput.

To mitigate this issue, we introduce a leaf-centric paradigm to ensure traffic originating from the same leaf switch is evenly distributed across multiple spine switches with balanced loads. Through rigorous theoretical analysis, we establish a sufficient condition for avoiding routing polarization and propose a corresponding logical topology design algorithm with polynomial-time complexity. Large-scale simulations validate up to 19.27\% throughput improvement and a 99.16\% reduction in logical topology computation overhead compared to Mixed Integer Programming (MIP)-based methods.

\end{abstract}
\begin{IEEEkeywords}
OCS-based GPU cluster; Architecture and design of optical networks. 
\end{IEEEkeywords}

\section{Introduction}
\begin{sloppypar}

\IEEEPARstart{I}{n} recent years, optical circuit switches (OCS) have been increasingly adopted in cluster architectures and vendors including Google, Huawei, and NVIDIA have deployed or are actively evaluating OCS for GPU clusters specifically optimized to handle machine learning (ML) workloads \cite{GoogleA3Supercomputers,google-kubernetes-ai-infra-integration,google-ai-hypercomputer-schedule-gke,Patronas:25,huawei_2024_all_optical_switch,huawei-optical-switch-intelligent-computing-2025}. Flows generated by ML workloads are distinguished by substantial data volumes and complies with the coflow property, where flows that complete transmission earlier may need to wait for those finishing later, thus rendering it highly sensitive to traffic contention \cite{han2025vclos,qian2024alibaba,dong2021accl}. However, unlike electronic packet switches (EPSes), an OCS can only establish circuit-based paths between directly connected devices. This results in fewer available routing paths for traffic traversing the OCSes. Consequently, inadequate OCS configuration may lead to the traffic contention caused by what we term the \textbf{routing polarization} issue.

The routing polarization problem is an inherently unique issue to OCS-based clusters that has \textbf{remained underexplored}. A typical OCS-based cluster \cite{poutievski2022jupiter,cao2021trod,GoogleA3Supercomputers,huawei_2024_all_optical_switch,huawei-optical-switch-intelligent-computing-2025,Patronas:25} adopts a three-tier leaf-spine-OCS architecture: specifically, servers are first connected to \emph{leaf-layer} EPSes, which are further interconnected with \emph{spine-layer} EPSes to form a single \textbf{Pod}, and multiple such Pods are interconnected via \emph{core-layer} OCSes. Notably, an OCS is transparent to network packets, thereby enabling dynamic adjustments to the number of inter-Pod interconnection links through OCS reconfiguration. Following prior work \cite{zhao2021understanding, poutievski2022jupiter}, we define the \textbf{Logical Topology} as the interconnection pattern of spine switches, formed by the configuration of the OCS core layer. Traditional Clos-based clusters \cite{singh2016jupiter,qian2024alibaba} employ EPSes in the core layer to interconnect spine switches. By contrast, OCSes in the core layer only support the establishment of one-to-one inter-spine connections, whereas EPSes enable many-to-many inter-spine connections. This constraint can induce the \textbf{Routing Polarization problem}\footnote{A well-recognized related issue is \textbf{Hash Polarization} \cite{qian2024alibaba}, defined as a phenomenon where traffic flows are unevenly distributed across multiple available paths due to hashing-based load-balancing decisions. Notably, Hash Polarization may coexist with Routing Polarization.}. Routing polarization refers to the scenario where the bandwidth requirements between specific pairs of Pods are unevenly fulfilled through links among different spine switches. Within a Pod, leaf switches are fully interconnected with spine switches via a limited number of links; as a result, the leaf-to-spine links connecting to spine switches with a higher density of established inter-Pod connections may become bottlenecks for cross-Pod communications. In contrast, in Clos-based networks, spine switches achieve full interconnection through core-layer EPSes, thereby avoiding such a issue. Existing works~\cite{2021Gemini,poutievski2022jupiter,10892202,cao2021trod,cao2023threshold} design logical topologies based on the inter-Pod bandwidth requirement, overlooking the link establishment details between spine switches. Consequently, they fail to mitigate the routing polarization problem, and we term such an approach the \textbf{Pod-centric} logical topology design paradigm.

To address this routing polarization problem, logical topology design must ensure that traffic originating from the same leaf switch is evenly distributed across multiple spine switches with balanced loads. We therefore refer to this alternative methodology as the \textbf{leaf-centric} logical topology design paradigm. This shift, however, is \textbf{non-trivial} for two critical reasons. First, the number of leaf switches far exceeds that of Pods. Adopting a leaf-centric perspective can lead to a significant increase in \textbf{computational overhead} when relying on Mixed Integer Programming (MIP) for logical topology computation, consistent with industry practices (e.g., Google \cite{poutievski2022jupiter}). Second, we find that a poorly designed intra-Pod \textbf{physical topology}, specifically, the wiring configuration between leaf and spine switches within the same Pod, not only renders the leaf-centric logical topology design an \textbf{NP-complete} problem but also induces \textbf{unavoidable} routing polarization. We further find that when the intra-Pod physical topology satisfies certain conditions, this problem is no longer NP-Complete, which allows us to propose a polynomial-time solution.

In this paper, we present \textbf{LumosCore}, a comprehensive framework that unifies a polynomial-time algorithm for leaf-centric logical topology design with the fundamental design principles of intra-Pod physical topologies. The principal contributions of this work are summarized as follows:

\begin{figure*}
    \centering
    \includegraphics[width=1\linewidth]{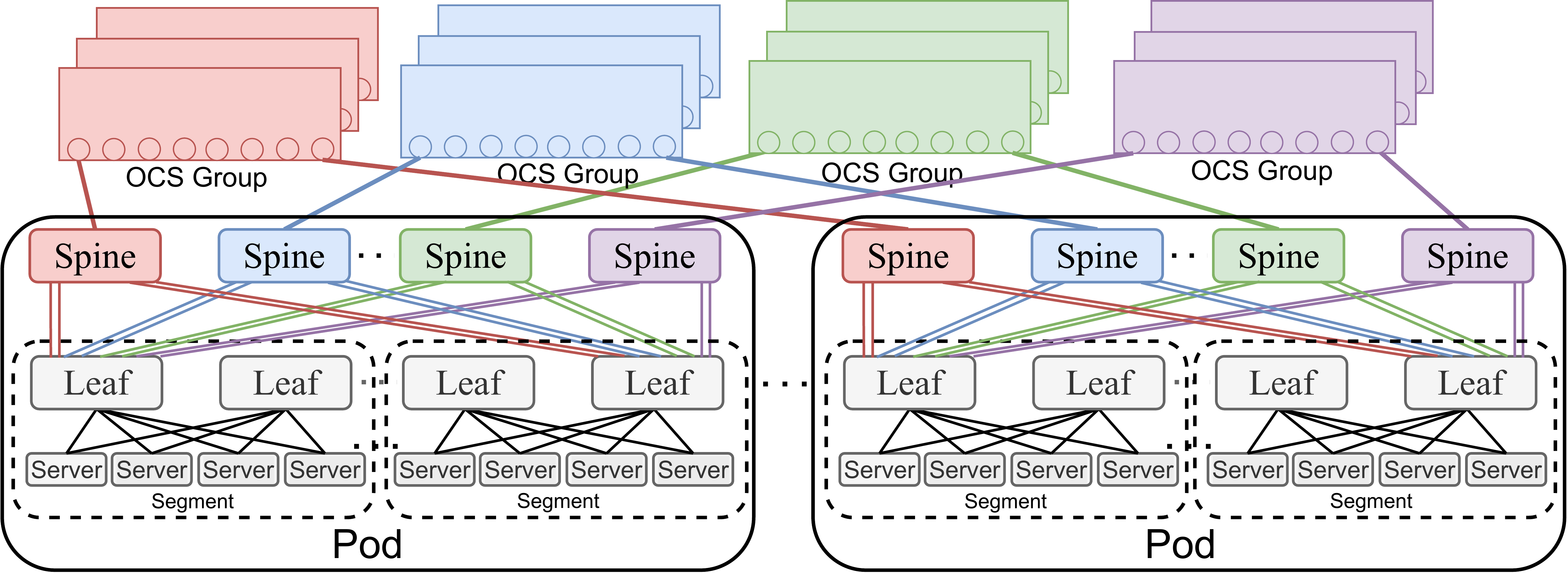}
    \caption{{LumosCore} adopts a typical three-tier topology comprising leaf layer, spine layer, and am optical core layer. }\label{fig:physical_architecture}
\end{figure*}

\begin{itemize}
\item In contrast to the previous Pod-centric logical topology design algorithm~\cite{2021Gemini,poutievski2022jupiter,cao2023threshold,10892202,0On}, we introduce a polynomial-time leaf-centric logical topology design algorithm specifically tailored to balance the traffic originating from the same leaf among multiple spine switches. 

\item Through rigorous theoretical analysis, we propose an intra-pod physical topology design and formally prove that, under this physical topology, the proposed polynomial-time algorithm can \textbf{avoid routing polarization}.

\item Using large-scale simulations driven by production workload traces, we further substantiate the superior performance of LumosCore, including up to a 19.27\% improvement in training throughput compared to Pod-centric strategy and a 99.16\% reduction in computational overhead for computing the logical topology compared to MIP-based strategy .

\end{itemize}
\end{sloppypar}

\section{Background and Motivation}
\begin{sloppypar}
\subsection{The basic architecture of LumosCore}
As illustrated in Fig.~\ref{fig:physical_architecture}, LumosCore adopts a three-tier hybrid leaf–spine–OCS network architecture, analogous to designs previously deployed in large-scale GPU clusters (e.g., Google’s A3/A4 OCS-based GPU clusters \cite{GoogleA3Supercomputers,google-ai-hypercomputer-schedule-gke,google-kubernetes-ai-infra-integration}).

\textbf{Intra-Pod Architecture:} Servers connect to the leaf switches using a commonly used Rail-Optimized configuration~\cite{qian2024alibaba}, enabling intra‑Segment (A \textbf{Segment} consists of servers interconnected via the same leaf switch) communication among servers via the leaf layer to reduce the inter-leaf traffic demand. Each leaf switch is provisioned with $K_{\text{leaf}}$ GPU-facing ports and an additional $K_{\text{leaf}}$ ports connected to the spine layer. More precisely, each leaf switch connects to $K_{\text{leaf}}/\tau$ distinct spine switches, where $\tau$ denotes the number of links between each leaf and each spine within a Pod. Consequently, a single pod contains $K_{\text{spine}}/\tau$ leaf switches and $K_{\text{leaf}}/\tau$ spine switches.

\textbf{Inter-Pod Architecture:} Since each pod incorporates $K_{\text{leaf}}/\tau$ spine switches, the architecture partitions all OCS devices into ${K_{\text{leaf}}}/{\tau}$ disjoint groups for achieving higher scalability. The $h$-th spine switch in each pod is connected to the $h$-th OCS group, thereby enforcing a consistent and deterministic mapping across all pods. Each OCS group comprises $K_{\text{spine}}$ OCS devices, and each OCS device is equipped with one pair of egress and ingress ports that connect to distinct pods. Let $K_{\text{ocs}}$ denote the total number of egress/ingress port pairs per OCS device. Under this construction, up to $K_{\text{ocs}}$ pods can be mutually interconnected.

\begin{figure*}
\centering
     
    \begin{subfigure}[t]{0.2\linewidth}
      \centering
      \includegraphics[width=\linewidth]{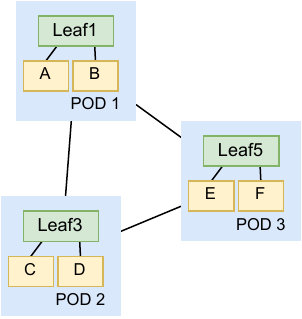}
      \caption{A Leaf-level Network Requirement example}
      \label{fig:example-logical}
     \end{subfigure}
     \begin{subfigure}[t]{0.39\linewidth}
     \centering
     \includegraphics[width=\linewidth]{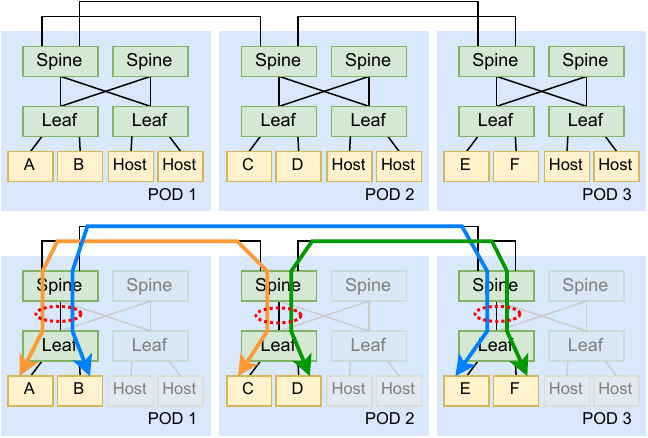}
     \caption{Pod-centric logical topology may introduce bottleneck on intra-Pod links}
     \label{fig:example-pod}
     \end{subfigure}
     \begin{subfigure}[t]{0.39\linewidth}
     \centering
     \includegraphics[width=\linewidth]{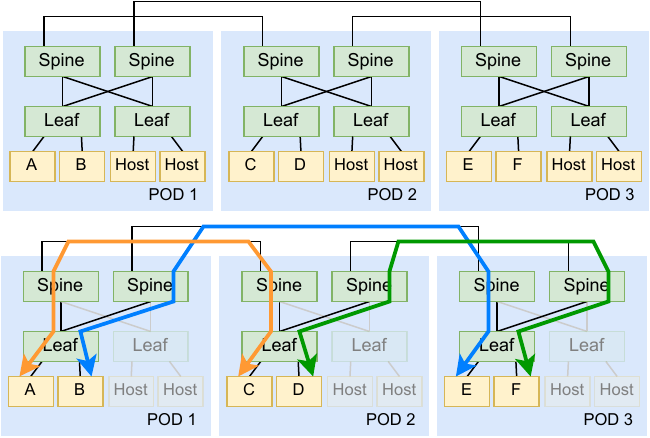}
     \caption{Leaf-centric logical topology alleviates such bottleneck by solving the routing polarization issue.}
     \label{fig:example-leaf}
    \end{subfigure}
    \emph{Upper figure shows the logical topology result under different paradigms, lower figure shows the routing result along with example flows under given logical topology. \\
    The bottleneck links, indicated by red circles, arise because the inter-Pod bandwidth requirements are unevenly fulfilled through links among different spine switches.}
    
    \caption{Pod-centric logical topology may result in the routing polarization issue.}
    \label{fig:example-conn}
\end{figure*}

\subsection{Important Concepts in LumosCore}
There are several important concepts in \emph{LumosCore} that will be used throughout the paper.\label{concept}

\begin{itemize}
    \item \textbf{\emph{Physical Topology}} characterizes the physical interconnection relationships among EPSes and between these switches and the OCSes. Existing work has predominantly examined how to interconnect EPSes with the OCSes \cite{zhao2021understanding,han2024lumoscore}. In contrast, this paper focuses on determining how \textbf{intra-Pod} leaf and spine switches should be interconnected within a three-tier leaf–spine–OCS architecture.


    \item \textbf{\emph{Leaf-level Network Requirement}} quantifies the number of disjoint cross-pod paths between every cross-pod leaf pair as Fig.\ref{fig:example-logical} shows. A \emph{Leaf-level Network Requirement} is determined by the bandwidth requirements of jobs need cross-pod communication. In certain cases, we allow multiple flows to share one inter-Pod path if the impact of such sharing on the corresponding jobs is minimal.

    \item \textbf{\emph{Logical Topology}} is the topology among different spine switches, formed by configuring the OCS layer. An OCS is transparent to network packets. Thus, creating a circuit inside an OCS for two spines is equivalent to directly adding a link between these two spines. \emph{Logical Topology} can be modified by reconfiguring the OCSes, which is \emph{a.k.a.} \emph{Topology Engineering~\cite{zhao2019minimal,poutievski2022jupiter,9651977}}. In OCS-based GPU clusers such as TopoOpt \cite{wang2022topoopt}, whenever a new task arrives, the logical topology must be reconfigured to satisfy the bandwidth requirements of the ML training workload.

    \item  \textbf{\emph{Routing Polarization}} refers to the scenario where the bandwidth requirements between specific pairs of Pods are unevenly fulfilled through links among different spine switches. Within a Pod, leaf switches are fully interconnected with spine switches through a limit number of links; consequently, the leaf-to-spine links connecting to spine switches with a higher density of established inter-Pod connections may become bottlenecks for cross-Pod communications. For instance, the logical topology in Fig.~\ref{fig:example-pod} satisfies the inter-Pod bandwidth requirements illustrated in Fig.~\ref{fig:example-logical}; however, traffic originating from Leaf 1 in Pod 1 can only be transmitted to both Pod 2 and Pod 3 through Spine 1. In this case, the intra-Pod link connecting to Spine 1 in Pod 1 become the communication bottleneck for cross-Pod traffic. It is noteworthy that \emph{Routing Polarization} may be \textbf{unavoidable} if the \emph{physical topology} is poorly designed. We present an example in Fig.~\ref{fig:logical topology design challenge}: given the \emph{Leaf-level Network Requirement} illustrated in Fig.~\ref{fig:logical topology}, a link is established between Leaf 1 of Pod 1 and Leaf 1 of Pod 2, and another between Leaf 1 of Pod 3 and Leaf 1 of Pod 2 (see Fig.~\ref{fig:illu_no_solution}). However, no additional free spine resources are available to establish a link between Leaf 1 of Pod 1 and Leaf 1 of Pod 3, resulting in the unavoidable routing polarization depicted in Fig.~\ref{fig:illu_protocal_incompatibility}.
    
\end{itemize}

\subsection{Requirements for Logical Topology Design Paradigms}\label{evaluation_of_logical}
Logical topology design paradigms should meet the following three requirements                         
First, Layer 2 (L2) protocol compatibility \textbf{must} be preserved. Specifically, if the ingress port of an OCS-facing port A on one spine switch is connected via the OCS to the egress port of an OCS-facing port B on another spine switch, then the egress port of A must correspondingly be connected via the OCS to the ingress port of B. This bidirectional connectivity constraint stems from the fact that standard L2 protocols (e.g., Address Resolution Protocol, ARP) unavoidably assume symmetric, bidirectional links. An equivalent constraint has also been imposed in prior OCS-based cluster deployments \cite{0On,10892202,poutievski2022jupiter,2021Gemini,han2024lumoscore}.

Second, the designed logical topology must avoid the routing polarization issue. Unlike conventional data center network (DCN) workloads, flows generated by ML workloads are characterized by large data volumes and comply with the coflow property, where flows completing transmission early may wait for those finishing later, making such traffic highly sensitive to congestion \cite{han2025vclos,qian2024alibaba,dong2021accl}.

Third, the logical topology design algorithm should exhibit polynomial-time computational complexity. While companies like Google leverage MIP to solve Pod-centric logical topology design problems \cite{poutievski2022jupiter}, the number of leaf switches far exceeds that of Pods, rendering the time overhead of MIP-based leaf-centric logical topology design prohibitive in large-scale clusters, particularly for ML workloads that require task-level recomputation of logical topologies \cite{2023TPU,wang2022topoopt}. Furthermore, MIP solvers fall into three primary categories: (1) proprietary solvers developed in-house by tech giants (e.g., Google, Huawei) \cite{perron2019or,HuaweiTCSLab_TAYLOR}; (2) open-source solvers (GLPK, LP-SOLVE, CBC) released under open-source licenses \cite{GNU2012GLPK,lpsolve_github,COINOR2005CBC}; and (3) commercial solvers requiring paid licenses (Gurobi, COPT, SCIP) \cite{Gurobi2024Downloads,CardinalOps2024COPT,SCIP2024Optimizer}. For small and medium-sized enterprises (SMEs) lacking the resources to build custom MIP solvers, adopting a polynomial-time algorithm for logical topology computation can significantly enhance the flexibility and practicality of deployment strategies.

\begin{figure*}
\centering
    \begin{subfigure}[b]{0.28\linewidth}
     \includegraphics[scale=0.6]{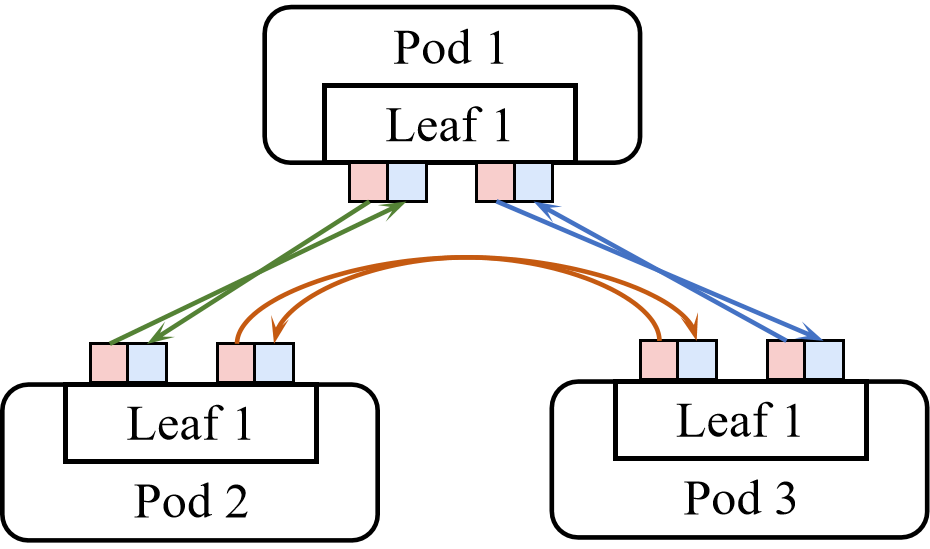}
     \caption{An example of Leaf-level Network Requirement, where leaf1 in three distinct Pods need to be interconnected with one bidirectional link.}
     \label{fig:logical topology}
     \end{subfigure}
     \begin{subfigure}[b]{0.32\linewidth}
     \includegraphics[scale=0.6]{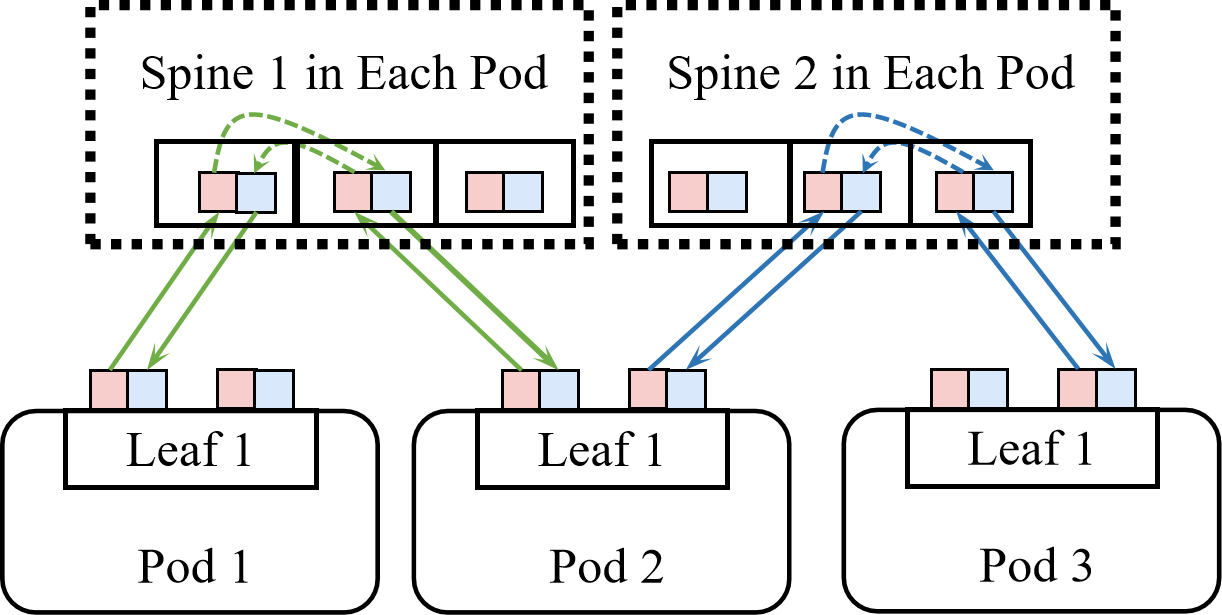}
     \caption{The design of the intra-Pod physical topology may result in unavoidable routing polarization. Note that Spine 1 and Spine 2 connect to different OCS Groups.}
     \label{fig:illu_no_solution}
     \end{subfigure}
     \begin{subfigure}[b]{0.29\linewidth}
     \includegraphics[scale=0.58]{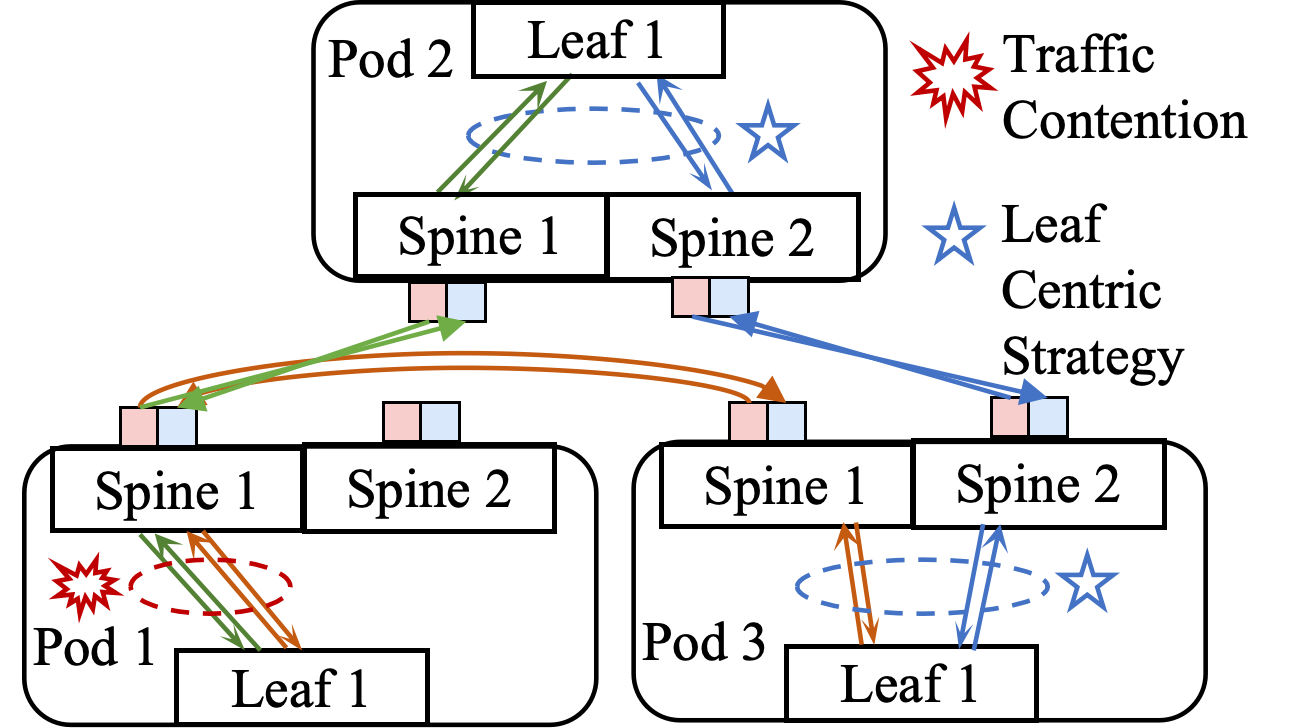}
     \caption{The given physical topology may result in unavoidable routing polarization, motivating us to further study how to design physical topology.}
     \label{fig:illu_protocal_incompatibility}
    \end{subfigure}
    \caption{An illustrative example demonstrating that a poorly designed intra-Pod physical topology can give rise to inherently unavoidable routing polarization. Notably, the logical topology should meet the L2 compatibility constraint \cite{poutievski2022jupiter,10892202,0On,2021Gemini,han2024lumoscore} which means that if the ingress port of an optical module $A$ on a spine switch is connected to the egress port of an optical module $B$ on another spine switch, then the egress port of $A$ must also be connected to the ingress port of $B$.}
    \label{fig:logical topology design challenge}
\end{figure*}

\subsection{The basic model for \textbf{Leaf-centric} Logical topology designing.}\label{modelPbuild}
We formulate an ILP (Integer Linear Programming) model to describe the designing of \textbf{Leaf-centric} logical topology as follows:

\noindent\textbf{Parameters}:
\begin{itemize}
    \item $P$: the number of Pods in a OCS-based GPU cluster.
    \item ${K_{spine}}$: the number of OCS-faced ports in each spine switch.
    \item ${K_{ocs}}$: the number of ports in each OCS.
    \item ${K_{leaf}}$: the number of spine-faced ports in each leaf switch.
    \item  $\tau$: the number of links between each leaf switch and each spine switch within a Pod.
    \item $L=[L_{ab}]$: The \emph{Leaf-level Network Requirement} $L$ is generated based on the communication demand of each GPU for all tasks, where $L_{ab}$ means the number of required links between the $a$-th leaf and the $b$-th leaf in the cluster.
\end{itemize}

\noindent\textbf{Decision Variables}:
\begin{itemize}
    \item $C_{ijh}$: The logical topology which means the number of connections between the $h$-th spine in the $i$-th Pod and the $j$-th Pod.
    \item $L_{abh}$: The number of connections between the $a$-th leaf and the $b$-th Leaf through the $h$-th spine in the $i$-th Pod and the $j$-th Pod, where $a\in \text{$i$-th Pod}$, $b\in \text{$j$-th Pod}$.
\end{itemize}
\textbf{Constraints}:

Given a \emph{Leaf-level Network Requirement Matrix} $L=[L_{ab}]$, where $L_{ab}=L_{ba}$ represents the inter-Pod network requirement between the $a$-th leaf and the $b$-th leaf and $L_{ab}=0$ if both leaves belong to the same Pod. According to the definition, $\sum_a L_{ab}\leq K_{leaf}$,$\sum_b L_{ab}\leq K_{leaf}$. We aim to find a \emph{Logical Topology} $C$ such that $L$ can be scheduled without contention\footnote{\textbf{By ``no contention'', we mean each leaf-level network requirement is fulfilled by a disjoint cross-Pod path. It is still possible for multiple flows to share one  intra-Pod path.}}.

Note that leaf-level network requirements can be fulfilled via different spines. Let $L_{abh}$ denote the number of required links from the $a$-th leaf to the $b$-th leaf that are fulfilled by the $h$-th spine. The sum of the numbers of required links from the $a$-th leaf to the $b$-th leaf across all spines must then equal the total number of required links $L_{ab}$ between these two leaves, i.e.,
\begin{equation}\label{sym_equ1}
   \sum_h L_{abh} = L_{ab}.
\end{equation}

Given $L_{abh}$, we can readily compute the total number of required links from the $a$-th leaf to the $h$-th spine as $\sum_b L_{abh}$, and that from the $h$-th spine to the $b$-th leaf as $\sum_a L_{abh}$. To \textbf{avoid routing polarization}, i.e., to prevent intra-Pod leaf-to-spine links from becoming communication bottlenecks, the aggregate network requirements $\sum_b L_{abh}$ and $\sum_a L_{abh}$ must satisfy:

\begin{equation}\label{sym_equ3}
\sum_b L_{abh} \leq \tau, \quad \sum_a L_{abh} \leq \tau.
\end{equation}

Given $L_{abh}$, we can compute $C_{ijh}$, the total number of required links in the logical topology between the $h$-th spine in the $i$-th Pod and the $h$-th spine in the $j$-th Pod as:

\begin{equation}\label{eqn:logical_topology}
C_{ijh}=\sum_{a\in \text{$i$-th Pod}}\sum_{b\in \text{$j$-th Pod}} L_{abh}.
\end{equation}
To ensure compatibility with L2 protocols, we enforce the following \emph{L2-compatibility constraint}:
\begin{equation}\label{sym_equ5}
   \sum_{a\in \text{$i$-th Pod}}\sum_{b\in \text{$j$-th Pod}} L_{abh} = \sum_{a\in \text{$i$-th Pod}}\sum_{b\in \text{$j$-th Pod}} L_{bah}. 
\end{equation}

The above constraints (\ref{sym_equ1}), (\ref{sym_equ3}), and (\ref{sym_equ5}) formulate an optimization model to derive the logical topology $C=[C_{ijh}]$ from the \emph{Leaf-level Network Requirement Matrix} $L$. This model can be solved using MIP solvers such as Gurobi \cite{achterberg2019s}; however, MIP incurs prohibitive computational costs, making it impractical for task-level OCS reconfiguration scenarios. We therefore need to develop a polynomial-time algorithm for logical topology design. Regrettably, we prove that the problem becomes NP-complete when the intra-Pod physical topology is ill-configured, as formally demonstrated in Theorem \ref{npc}.

\begin{theorem}\label{npc}
    The problem of designing a \textbf{Leaf-centric} logical topology is \emph{NP-complete} for intra-Pod physical topologies with $\tau=1$.
\end{theorem}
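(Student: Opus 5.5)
We would prove Theorem~\ref{npc} in the two standard parts. \textbf{Membership in NP} is easy. A certificate is the collection of integers $L_{abh}$. These are bounded by $\tau$, and there are polynomially many of them, namely (number of leaves)$^2\times K_{leaf}/\tau$. Constraints (\ref{sym_equ1}), (\ref{sym_equ3}) and (\ref{sym_equ5}) are linear equalities and inequalities, each checkable in polynomial time, and $C_{ijh}$ is then obtained from (\ref{eqn:logical_topology}).

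\textbf{NP-hardness.} We would reduce from 3-edge-colourability of cubic graphs, which Holyer showed is NP-complete. The key observation is what $\tau=1$ does to the model. Constraint (\ref{sym_equ3}) becomes $\sum_b L_{abh}\le 1$ and $\sum_a L_{abh}\le 1$. So for every leaf $a$ and every spine index $h$, at most one unit of $a$'s cross-Pod demand leaves through spine $h$, and at most one unit arrives through it. If we read the spine index $h$ as a \emph{colour}, this is exactly a proper edge-colouring condition at each leaf, with $K_{leaf}/\tau=K_{leaf}$ colours available.

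\textbf{Construction.} Given a cubic graph $G=(V,E)$, build the following instance.
\begin{itemize}
\item Create one Pod per vertex $v\in V$. Take $K_{leaf}=3$ and $\tau=1$, so each Pod has three spines, hence three OCS groups.
\item Choose $K_{spine}$ so that each Pod contains a designated leaf $\ell_v$. If the architecture forces more leaves per Pod, the extra leaves are given zero demand.
\item Take $K_{ocs}\ge |V|$ so that all Pods can be interconnected.
\item Set $L_{\ell_u\ell_v}=L_{\ell_v\ell_u}=1$ for every edge $uv\in E$, and $0$ otherwise. The row sums are $3\le K_{leaf}$, so the requirement is admissible.
\end{itemize}

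\textbf{Forward direction.} Suppose $G$ has a proper 3-edge-colouring $\chi$. Set $L_{\ell_u\ell_v h}=L_{\ell_v\ell_u h}=1$ exactly when $\chi(uv)=h$. Then (\ref{sym_equ1}) holds because each edge receives exactly one colour. Constraint (\ref{sym_equ3}) holds because the colouring is proper. Constraint (\ref{sym_equ5}) holds because the assignment is symmetric.

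\textbf{Converse.} Take any feasible $L_{abh}$. Since every Pod carries demand on only one leaf, the double sums in (\ref{sym_equ5}) collapse to single terms. The L2-compatibility constraint therefore forces $L_{\ell_u\ell_v h}=L_{\ell_v\ell_u h}$. By (\ref{sym_equ1}) and $L_{\ell_u\ell_v}=1$, each edge $uv$ is assigned a unique $h$. By (\ref{sym_equ3}), the edges at a vertex receive distinct $h$. Hence $\chi(uv)=h$ is a proper 3-edge-colouring. The construction is clearly polynomial.

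\textbf{Main obstacle.} The delicate step is making the L2 constraint (\ref{sym_equ5}) genuinely enforce the symmetry $L_{abh}=L_{bah}$. In general that constraint only equates aggregated counts over a Pod pair. Placing a single demanded leaf in each Pod sidesteps the issue. The remaining work is to check that this degenerate instance respects the architectural parameters of Section~\ref{modelPbuild}: the relation between $K_{spine}$, the number of leaves per Pod, and the OCS port counts. If a one-leaf Pod is not allowed, we would add dummy leaves with zero demand. Those leaves contribute nothing to any of the sums, so the argument is unaffected.
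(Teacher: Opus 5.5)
Your proof is correct, and its core correspondence matches the paper's. With $\tau=1$ the spine index $h$ acts as a colour, \eqref{sym_equ3} becomes properness at every leaf, and L2-compatibility supplies the symmetry. The execution differs, however.

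The paper argues by ``restriction'' to multi-colouring. It replaces \eqref{sym_equ5} outright by the entrywise identity $L_{abh}=L_{bah}$, turns each demanded leaf pair into a virtual node, and joins virtual nodes that share an endpoint; in effect it colours the line graph of the demand graph. It then cites NP-completeness of multi-colouring. Two things are left implicit there:
\begin{itemize}
\item The mapping runs from the leaf-centric problem to colouring. For hardness it tacitly needs that colouring line graphs, i.e.\ edge-colouring, is already hard.
\item \eqref{sym_equ5} only equates Pod-pair aggregates. It therefore does not imply entrywise symmetry once a Pod has several active leaves.
\end{itemize}
Your Karp reduction from Holyer's 3-edge-colouring of cubic graphs runs in the right direction and names the hard source problem. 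Your one-active-leaf-per-Pod device is exactly what makes \eqref{sym_equ5} collapse to $L_{\ell_u\ell_v h}=L_{\ell_v\ell_u h}$.

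You are right that this is the crux. Without symmetry, \eqref{sym_equ1} and \eqref{sym_equ3} alone describe an edge-colouring of the bipartite out-copy/in-copy multigraph, whose maximum degree is at most $K_{leaf}$. K\H{o}nig's edge-colouring theorem makes that always feasible, so hardness comes entirely from the L2 constraint.

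Your version also buys more:
\begin{itemize}
\item a sharper statement, with hardness already for $K_{leaf}=3$, unit demands, and one leaf per Pod;
\item an actual NP-membership argument, which the paper only asserts.
\end{itemize}
In exchange, the paper's framing describes the general $\tau=1$ problem with arbitrary demand multiplicities as a colouring problem, rather than exhibiting a single hard family.

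One small caveat on membership: bounding the certificate size by (number of leaves)$^2\times K_{leaf}/\tau$ tacitly assumes port counts such as $K_{leaf}$ are encoded in unary. That is the natural convention for an explicitly given physical topology.
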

\begin{proof}
We prove the NP-completeness of the proposed model with $\tau=1$ via a reduction by restriction. Specifically, we show that the general case of our model encompasses a special case that is polynomial-time equivalent to the multi-coloring problem~\cite{halldorsson2004multicoloring,10892202}, a well-known NP-complete problem. By verifying this structural congruence and proving the NP-certificate property, we conclude the NP-completeness proof. We impose the following restrictions:
\begin{itemize}
    \item We set $\tau=1$,\emph{i.e.}, there exists one link between each leaf and each spine in each Pod.
    \item There exists certain cases where $\exists_{a,b} L_{ab}>0$.
\end{itemize}

The above constraints (\ref{sym_equ1})(\ref{sym_equ3})(\ref{sym_equ5}) can be transformed as follows:
\begin{align}
    \sum_{b} L_{abh} \leq 1, &\quad \forall a,h \label{new_eq2_5} \\
    \sum_{a} L_{abh} \leq 1, &\quad \forall b,h \label{new_eq3_5} \\
    L_{abh} = L_{bah},       &\quad \forall a,b,h \label{new_eq4} 
\end{align}

We consider a certain case where there exists at least a pair of $(a,b)$ so that the $a$-th leaf and $b$-th leaf need at least one link ($L_{ab} \geq 1,\exists a,b $). Whenever this condition holds, there must exist $a,b \in V$ satisfying the connectivity requirement:

\begin{equation}
    \sum_{h} L_{abh} \geq 1, \quad \forall a,b \in V \label{new_eq5}
\end{equation}

By constructing a graph transformation where $K_{\text{leaf}}$ represents the color palette size and each Pod pair connection $(a,b)$ is modeled as a virtual node, which implies Eqs. \eqref{new_eq4}, we establish correspondence with the multi-coloring problem. Virtual links connect two virtual nodes precisely when their corresponding Pod pair connection $(a,b)$ share common endpoints. Under this mapping, the constraint system \eqref{new_eq2_5}-\eqref{new_eq5} characterizes the generalized multi-coloring requirements: 
\begin{itemize}
    \item Each virtual node must receive at least one color (Eq. \eqref{new_eq5});
    \item Adjacent virtual nodes require distinct color assignments (Eqs. \eqref{new_eq2_5}\eqref{new_eq3_5}). 
\end{itemize}

The fundamental problem can be reduced by determining the existence of a valid coloring scheme using no more than $K_{\text{leaf}}$ colors, which is a classical NP-complete problem~\cite{halldorsson2004multicoloring}. By restriction method~\cite{garey1974some,10892202}, we prove the Theorem~\ref{npc}. This theoretical hardness, combined with the need for real-time computation, presents a significant challenge in practical deployment scenarios.

\end{proof}

\noindent\textbf{Discuss: }We formally prove that designing a logical topology is an NP-complete problem when $\tau = 1$. This observation motivates our analysis about the design of intra-Pod physical topology.

\subsection{Mathematical Preliminaries for \emph{LumosCore}}\label{Sec:symetric_matrix}
The design of \emph{LumosCore} is founded on two fundamental theorems: the Symmetric Matrix Decomposition Theorem and the Integer Matrix Decomposition Theorem. These theorems play a central role in topology design and in the development of polynomial-time complexity algorithms. The Symmetric Matrix Decomposition Theorem was originally formulated and proved in~\cite{han2024lumoscore}, while the Integer Matrix Decomposition Theorem was initially introduced and established in~\cite{zhao2018minimalextended}.

\begin{theorem}\label{lem:matrix_decomp}
\emph{(Symmetric Matrix Decomposition Theorem)} For any symmetric integer matrix $L$, there exists an integer matrix $A$, such that $L=A+A^T$ and $$\lfloor \frac{\sum_b L_{ab} }{2}\rfloor \leq \sum_b A_{ab} \leq \lceil \frac{\sum_b L_{ab} }{2}\rceil,\forall a.$$
$$\lfloor \frac{\sum_a L_{ab} }{2}\rfloor \leq \sum_a A_{ab} \leq \lceil \frac{\sum_a L_{ab} }{2}\rceil,\forall b.$$
\end{theorem}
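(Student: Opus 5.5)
The plan is to read $L$ as an undirected multigraph and to obtain $A$ from a balanced orientation of its edges. I will assume, as in the setting of Section~\ref{modelPbuild}, that $L$ is nonnegative and has even (in our application, zero) diagonal entries. Evenness of the diagonal is necessary anyway, since $L_{aa}=2A_{aa}$.

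\textbf{Step 1: build the multigraph.} Take one vertex per index $a$. For each unordered pair $a\neq b$, place $L_{ab}$ parallel edges between $a$ and $b$, and place $L_{aa}/2$ loops at $a$. Counting each loop twice, the degree of $a$ is $\deg(a)=\sum_b L_{ab}$. By symmetry of $L$ this also equals the column sum $\sum_b L_{ba}$.

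\textbf{Step 2: orient every edge so that, at each vertex, out-degree and in-degree differ by at most one.} This is the main step, and I will use the classical Euler-circuit argument. Add an auxiliary vertex $z$ and join it by one edge to every vertex of odd degree. There is an even number of such vertices, so $z$ also has even degree, and every vertex of the augmented multigraph now has even degree. In each connected component, take an Euler circuit and orient each edge in the direction of traversal; a loop contributes one out and one in. Then every vertex has out-degree equal to in-degree. Deleting $z$ and its edges removes at most one incident edge from each original vertex. Hence the final out-degree $d^+(a)$ and in-degree $d^-(a)$ satisfy $d^+(a)+d^-(a)=\deg(a)$ and $|d^+(a)-d^-(a)|\le 1$. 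Together these give
\[
\lfloor \deg(a)/2\rfloor\le d^\pm(a)\le\lceil \deg(a)/2\rceil .
\]
The only delicate points are handling the odd-degree vertices, which is what $z$ is for, and treating loops consistently.

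\textbf{Step 3: read off $A$.} Let $A_{ab}$ be the number of edges oriented from $a$ to $b$, and let $A_{aa}=L_{aa}/2$. Each of the $L_{ab}$ edges between $a\neq b$ is oriented in exactly one direction, so $A_{ab}+A_{ba}=L_{ab}$. On the diagonal, $2A_{aa}=L_{aa}$. Together these give $L=A+A^T$, with $A$ a nonnegative integer matrix.

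It remains to check the two bounds. The row sum $\sum_b A_{ab}$ equals $d^+(a)$, and the column sum $\sum_a A_{ab}$ equals $d^-(b)$. By Step 1, $\deg(a)=\sum_b L_{ab}$ and $\deg(b)=\sum_a L_{ab}$. The floor/ceiling bounds from Step 2 are therefore exactly the two inequalities in the statement of Theorem~\ref{lem:matrix_decomp}. The whole construction runs in polynomial time in the total weight of $L$, since Euler circuits can be found in linear time in the number of edges.
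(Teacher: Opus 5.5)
Your proof is correct, and it takes a different route from the one the paper relies on. The paper does not prove Theorem~\ref{lem:matrix_decomp} itself. It imports the result from \cite{han2024lumoscore}, and in Step~1 of the Heuristic-Decomposition algorithm it computes $A$ by solving a flow model following \cite{Z2012Efficient}. That step costs $O(\eta^6\log\eta)$, which the paper says dominates the whole algorithm.

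You instead obtain $A$ as a balanced orientation of the multigraph of $L$, using the auxiliary-vertex Euler-circuit trick. The bookkeeping is sound: each loop contributes one out and one in, consistent with $A_{aa}=L_{aa}/2$, and the row and column sums of $A$ are exactly $d^+$ and $d^-$.

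What your route buys:
\begin{itemize}
\item a short, self-contained proof;
\item a construction linear in $\sum_{a,b}L_{ab}\le \eta K_{\text{leaf}}$, so Step~1 falls to $O(\eta^2)$ (the cost of reading $L$) and Step~2 becomes the bottleneck.
\end{itemize}
What the flow formulation buys is flexibility. It is the same machinery as Step~2, and extra constraints or costs on individual entries of $A$ fit in without changing the method.

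Two remarks on your assumptions:
\begin{itemize}
\item Your even-diagonal condition is a genuine correction to the statement as literally written: for $L=[1]$ no integer $A$ exists. Nothing is lost, since the paper has $L_{aa}=0$.
\item Nonnegativity is without loss of generality. Apply your construction to $L+2MJ$, with $J$ the all-ones matrix and $M$ large, and then subtract $MJ$ from the resulting matrix. Every row and column sum of $A$ drops by $M\eta$, while the corresponding sum of $L+2MJ$ exceeds that of $L$ by $2M\eta$. The floor/ceiling bounds therefore carry over to arbitrary symmetric integer $L$ with even diagonal.
\end{itemize}
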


\begin{theorem}\label{lem:minirewir}
\emph{(Integer Matrix Decomposition Theorem)} For any integer matrix $A$, there exists ${H}$ integer matrices, such that $A = A^{(1)}+ A^{(2)}+\dots+ A^{({H})}$, and for any $a=1,\dots,I$,$b=1,\dots,J$, $h=1,2,\dots,{H}$,
$$\lfloor\frac{A_{ab}}{{H}}\rfloor \leq A_{ab}^h \leq \lceil\frac{A_{ab}}{{H}}\rceil,$$
$$\lfloor\frac{\sum_aA_{ab}}{{H}}\rfloor \leq \sum_aA_{ab}^h \leq \lceil\frac{\sum_aA_{ab}}{{H}}\rceil,$$
$$\lfloor\frac{\sum_bA_{ab}}{{H}}\rfloor \leq \sum_bA_{ab}^h \leq \lceil\frac{\sum_bA_{ab}}{{H}}\rceil.$$
\end{theorem}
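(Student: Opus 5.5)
The Integer Matrix Decomposition Theorem is the classical balanced-rounding result. I will prove it by reduction to integral network flow, peeling off one matrix at a time. The proof is by induction on $H$; the case $H=1$ is trivial with $A^{(1)}=A$. For the induction step, the plan is to find an integer matrix $A^{(1)}$ that is the ``$1/H$ share'' of $A$ in all three senses simultaneously:
\begin{itemize}
\item entrywise, $\lfloor A_{ab}/H\rfloor\le A^{(1)}_{ab}\le\lceil A_{ab}/H\rceil$;
\item for every column sum and every row sum, the same floor/ceiling bounds.
\end{itemize}
The remainder $A-A^{(1)}$ will then satisfy the analogous bounds with $H-1$ in place of $H$. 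Applying the induction hypothesis to it produces $A^{(2)},\dots,A^{(H)}$. Throughout I assume $A$ is nonnegative, as it is in every application here, since it counts links.

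To find $A^{(1)}$, build a bipartite flow network. There is a source $s$, a row node for each $a$, a column node for each $b$, and a sink $t$, with the following lower and upper capacities:
\begin{itemize}
\item arc $s\to a$: bounds $\lfloor \sum_b A_{ab}/H\rfloor$ and $\lceil \sum_b A_{ab}/H\rceil$;
\item arc $a\to b$: bounds $\lfloor A_{ab}/H\rfloor$ and $\lceil A_{ab}/H\rceil$;
\item arc $b\to t$: bounds $\lfloor \sum_a A_{ab}/H\rfloor$ and $\lceil \sum_a A_{ab}/H\rceil$.
\end{itemize}
The fractional flow $A_{ab}/H$ on arc $a\to b$, with the induced flows on the source and sink arcs, is feasible, since each value lies between its floor and ceiling. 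By the integrality theorem for flows with lower and upper bounds, which follows from total unimodularity of the network matrix, an integral feasible circulation exists whenever a fractional one does. Take $A^{(1)}_{ab}$ to be the integral flow on $a\to b$. It can be computed in polynomial time, and this also supports the paper's complexity claims.

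The main obstacle is making the induction consistent. We need $A'=A-A^{(1)}$ to satisfy $\lfloor A'_{ab}/(H-1)\rfloor$-type bounds whose floors and ceilings compose correctly into the $1/H$ bounds for the later pieces. The key arithmetic fact to use is: if $x$ is an integer and $y\in\{\lfloor x/H\rfloor,\lceil x/H\rceil\}$, then
\[
\lfloor x/H\rfloor\le \lfloor (x-y)/(H-1)\rfloor\le\lceil (x-y)/(H-1)\rceil\le\lceil x/H\rceil .
\]
To check it, write $x=qH+r$ with $0\le r<H$. Then $x-y$ equals either $q(H-1)+r$ (when $y=q$) or $q(H-1)+r-1$ (when $y=q+1$, $r\ge1$). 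In both cases the quotient by $H-1$ lies in $[q,q+1]$, and it equals $q$ exactly when $r=0$.

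Apply this fact to each entry, each row sum and each column sum of $A'$. The row and column sums of $A'$ are the corresponding sums of $A$ minus those of $A^{(1)}$, so the fact applies to them directly. The induction hypothesis then gives the $H-1$ matrices, and their bounds are sandwiched inside the required $1/H$ bounds, which completes the proof.
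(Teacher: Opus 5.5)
Your proof is correct and complete. The paper gives no proof of this theorem; it imports it from \cite{zhao2018minimalextended} as a preliminary. There is therefore no in-paper argument to compare against, and your write-up supplies what the paper leaves to the citation.

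Your structure is sound. You first obtain one integral ``$1/H$ share'' $A^{(1)}$ as an integral point of a nonempty, bounded circulation polytope, which is a consistent rounding of $A/H$. You then recurse on $A-A^{(1)}$ with $H-1$ pieces, using the sandwich $\lfloor x/H\rfloor\le\lfloor (x-y)/(H-1)\rfloor\le\lceil (x-y)/(H-1)\rceil\le\lceil x/H\rceil$. Your case check of that sandwich, splitting on $r=0$ versus $r\ge 1$ and on which of the two roundings $y$ is, is right. The approach also gives an algorithm with $H$ feasible-circulation computations. That is consistent with the paper's quoted Step~2 cost being linear in $K_{\text{leaf}}$, hence in $H$.

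Two small points are worth tidying.

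First, say explicitly that you close the network with a return arc $t\to s$, so that ``circulation'' is meaningful. Its bounds can be $0$ and $+\infty$, or the floor and ceiling of $\sum_{a,b}A_{ab}/H$. Also state that, by conservation at the row and column nodes, the integral flows on $s\to a$ and $b\to t$ are exactly the row and column sums of $A^{(1)}$. This is what lets the arithmetic fact be applied to row and column sums of $A-A^{(1)}$.

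Second, the nonnegativity assumption is unnecessary. The division $x=qH+r$ with $0\le r<H$ is valid for negative $x$, and your sandwich check goes through unchanged. Total unimodularity of the node--arc incidence matrix also gives integral vertices for arbitrary integer bounds $l\le u$, negative ones included. So the same argument proves the statement for every integer matrix, as the theorem claims. If you prefer to keep nonnegativity, note that for $A\ge 0$ you have $A^{(1)}\le A$ entrywise, so the recursion stays within nonnegative matrices.
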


\end{sloppypar}

\section{Path to Leaf-centric Logical Topology Design Paradigm}\label{sec:logical_topo}
\begin{sloppypar}
Drawing upon the modeling framework introduced in $\S$\ref{modelPbuild}, it is evident that the configuration of the intra-Pod physical architecture directly impacts the synthesis of the logical topology design. This section therefore details the design rationale underlying the leaf-centric logical topology design paradigm and characterizes the specific intra-Pod physical architecture conditions under which the corresponding design problem admits a polynomial-time solution for any given valid \emph{Leaf-level Network Requirement Matrix}.

\subsection{Leaf-centric Logical Topology Design Paradigm}\label{spine_balance}

\subsubsection{A Polynomial-Time Algorithm for Logical Topology Design} \label{algorithm}
We propose a Heuristic-Decomposition algorithm that constructs a logical topology $C$ from the \emph{Leaf-level Network Requirement Matrix} $L$ with polynomial time complexity.

\noindent\textbf{Step 1:} By applying the Symmetric Matrix Decomposition Theorem (Theorem \ref{lem:matrix_decomp}), we decompose $L$ as $L = A + A^T$, where the matrix $A$ satisfies
\[
\sum_b A_{ab} \leq \Big\lceil \frac{\sum_b L_{ab}}{2} \Big\rceil, 
\qquad 
\sum_a A_{ab} \leq \Big\lceil \frac{\sum_a L_{ab}}{2} \Big\rceil.
\]

\noindent\textbf{Step 2:} Using the Integer Decomposition Theorem (Theorem \ref{lem:minirewir}), we further decompose $A$ into $H = K_{\text{leaf}}/\tau$ submatrices $A^{(1)}, A^{(2)}, \dots, A^{(H)}$ such that
\[
A = A^{(1)} + A^{(2)} + \cdots + A^{(H)},
\]
and, for each $h \in \{1,\dots,H\}$,
\[
\sum_{a} A_{ab}^h \leq \Big\lceil \frac{\sum_{a} A_{ab}}{H} \Big\rceil, 
\qquad 
\sum_{b} A_{ab}^h \leq \Big\lceil \frac{\sum_{b} A_{ab}}{H} \Big\rceil.
\]

\noindent\textbf{Step 3:} Define $L_{abh} = A_{ab}^h + A_{ba}^h$. The logical topology $C_{ijh}$ is then computed from $\{L_{abh}\}$ according to (\ref{eqn:logical_topology}).

\subsection{Time Complexity Analysis of Heuristic-Decomposition Algorithm}
Let the number of leaves in the cluster be denoted by $\eta = \frac{K_{\text{spine}} \times P}{\tau}$, where $P$ is the number of Pods. The time complexity of each step in Algorithm \ref{alg:heuristic_decomp} is analyzed as follows:
\begin{itemize}
  \item \textbf{Step 1}: Construct a multi-commodity flow (MCF) model following \cite{Z2012Efficient}, with time complexity $O(\eta^6 \log \eta)$.
  \item \textbf{Step 2}: Construct an MCF model as described in \cite{zhao2019minimal}, with time complexity $O(K_{\text{leaf }} \eta^4 \log \eta)$.
  \item \textbf{Step 3}: Direct computation of $L_{abh}$ and logical topology $C_{ijh}$, with time complexity $O(K_{\text{leaf }} \eta^2)$.
\end{itemize}
Since $K_{\text{leaf}} < \eta$, by applying the Master Theorem \cite{bentley1980general}, the overall time complexity of the Heuristic-Decomposition algorithm is dominated by Step 1 and is therefore $O(\eta^{6}\log \eta)$.

\begin{algorithm}[htbp]
  \caption{Heuristic-Decomposition for Leaf-centric Logical Topology Design}
  \label{alg:heuristic_decomp}
  \begin{algorithmic}[1] 
    \REQUIRE Leaf-level Network Requirement Matrix $L$, $K_{\text{spine}}$, $P$, $\tau$, $K_{\text{leaf}}$
    \ENSURE Logical topology $C = \{C_{ijh}\}$
    
    \STATE \textbf{Step 1: Symmetric Matrix Decomposition of $L$}
    \STATE Compute $A$ by decomposing $L$ so that $ A + A^T = L$ (via Symmetric Matrix Decomposition Theorem \ref{lem:matrix_decomp}) 
    
    \STATE \textbf{Step 2: Integer Decomposition of Matrix $A$}
    \STATE Compute $H = \frac{K_{\text{leaf}}}{\tau}$
    \STATE Compute $A^{(h)}$ by decomposing $A$ so that $\sum_{h=1}^H A^{(h)}=A$ (via Integer Decomposition Theorem \ref{lem:minirewir})
    
    \STATE \textbf{Step 3: Construct Logical Topology}
    \STATE For all leaf nodes $a,b$ and $h \in \{1, \dots, H\}$, compute $L_{abh} = A_{ab}^{(h)} + A_{ba}^{(h)}$
    \STATE For all Pod indices $i,j$ and $h \in \{1, \dots, H\}$, compute logical topology via:
    \STATE \quad $C_{ijh}=\sum_{a\in \text{$i$-th Pod}}\sum_{b\in \text{$j$-th Pod}} L_{abh}$ (via Eq.\eqref{eqn:logical_topology})
    
    \RETURN $C$
  \end{algorithmic}
\end{algorithm}

\subsection{How to design intra-Pod architecture}
When designing the intra-Pod architecture, we follow the leaf-spine structure commonly adopted in commercial cluster deployments \cite{qian2024alibaba,poutievski2022jupiter,DGX2023}, where leaf and spine switches are interconnected through a uniform full-mesh topology, as shown in Fig.\ref{fig:physical_architecture}. In this section, we concentrate on the selection of the parameter $\tau$, which denotes the number of links established between leaf and spine switches within a single Pod.

When $\tau = 2$, implying that each leaf switch is connected to each spine switch within the same Pod via two parallel links, it follows directly from the inequalities established in \textbf{Step 1} and \textbf{Step 2} in $\S$ \ref{algorithm} that the solution produced by the Heuristic-Decomposition algorithm satisfies constraints (\ref{sym_equ1}), (\ref{sym_equ3}), and (\ref{sym_equ5}). This result is formally stated in Theorem \ref{lem:contention_free_tau2}. Given an OCS with a port count of $K_{ocs}$, a three-tier OCS-based cluster can accommodate at most $K_{leaf} * K_{spine} * K_{ocs} / {\tau}$ GPUs. Thus, a smaller value of $\tau$ allows the cluster to host a larger number of GPUs. However, setting $\tau = 1$ may lead to the routing polarization issue. Therefore, according to Theorem \ref{lem:contention_free_tau2}, we set $\tau = 2$ when designing intra-Pod architecture in LumosCore.

\begin{theorem}\label{lem:contention_free_tau2}
When $\tau = 2$, for any Leaf-level Network Requirement Matrix $L$, the solution produced by the Heuristic-Decomposition algorithm satisfies all constraints (\ref{sym_equ1}), (\ref{sym_equ3}), and (\ref{sym_equ5}). Equivalently, there exists a logical topology $L$ such that $L$ can be scheduled without incurring any the routing polarization issue.
\end{theorem}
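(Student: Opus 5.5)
We verify directly that the $L_{abh}$ produced by Algorithm~\ref{alg:heuristic_decomp} with $\tau=2$, $H=K_{\text{leaf}}/2$, satisfies (\ref{sym_equ1}), (\ref{sym_equ3}) and (\ref{sym_equ5}). The only given inputs are the degree bounds $\sum_b L_{ab}\le K_{\text{leaf}}$ and $\sum_a L_{ab}\le K_{\text{leaf}}$, together with the symmetry $L=L^T$. We check the three constraints in turn. The real work is in the capacity constraint (\ref{sym_equ3}), where the factor $2=\tau$ must absorb the two halves $A$ and $A^T$.

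\textbf{Constraint (\ref{sym_equ1}).} Summing $L_{abh}=A^{(h)}_{ab}+A^{(h)}_{ba}$ over $h$ and using Step~2 gives $A_{ab}+A_{ba}$. By Step~1 this equals $(A+A^T)_{ab}=L_{ab}$.

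\textbf{Constraint (\ref{sym_equ5}).} The definition makes $L_{abh}=L_{bah}$ hold entrywise. Summing over $a$ in Pod $i$ and $b$ in Pod $j$ gives equality of the two double sums. One caveat concerns the term $A_{aa}$: since $L_{aa}=0$ and $A$ is an integer matrix (taken nonnegative, as the MCF construction yields), we may assume $A_{aa}=0$, so no spurious self-terms appear.

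\textbf{Constraint (\ref{sym_equ3}).} Fix a leaf $a$ and a spine index $h$. Write
\[
\sum_b L_{abh}=\sum_b A^{(h)}_{ab}+\sum_b A^{(h)}_{ba},
\]
which is a row sum of $A^{(h)}$ plus a column sum of $A^{(h)}$, both at index $a$. The bounds chain as follows.
\begin{itemize}
\item Step~2 gives $\sum_b A^{(h)}_{ab}\le\lceil \sum_b A_{ab}/H\rceil$.
\item Step~1 gives $\sum_b A_{ab}\le\lceil \sum_b L_{ab}/2\rceil\le \lceil K_{\text{leaf}}/2\rceil=K_{\text{leaf}}/2=H$. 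This uses that $K_{\text{leaf}}$ is even, which is implicit in $H=K_{\text{leaf}}/\tau$ being an integer.
\item Hence $\sum_b A_{ab}/H\le 1$, so the row sum of $A^{(h)}$ is at most $1$.
\item Symmetrically, $\sum_b A_{ba}\le\lceil \sum_b L_{ba}/2\rceil\le H$, so the column sum of $A^{(h)}$ at index $a$ is also at most $1$.
\end{itemize}
Therefore $\sum_b L_{abh}\le 2=\tau$. The bound $\sum_a L_{abh}\le 2$ follows by the same argument with the roles of rows and columns exchanged, or directly from $L_{abh}=L_{bah}$.

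\textbf{Main obstacle.} This step is the crux. The key point is that halving in Step~1 loses at most a ceiling, and $\lceil K_{\text{leaf}}/2\rceil$ still equals $H$. Each of the two halves then contributes at most one unit per spine, and that uses up exactly the $\tau=2$ parallel leaf–spine links. This also explains why $\tau=1$ fails: the two contributions cannot both fit in a single link.

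Finally, $C_{ijh}$ defined by (\ref{eqn:logical_topology}) is then a valid logical topology on which $L$ is routed without routing polarization, which gives the equivalent formulation in the statement.
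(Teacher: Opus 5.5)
Your proposal is correct and follows the paper's proof. It uses the same Step~1/Step~2 decomposition and the same three-constraint check, and it handles the capacity constraint (\ref{sym_equ3}) by splitting $\sum_b L_{abh}$ into a row-sum and a column-sum of $A^{(h)}$. Your per-half bound $\sum_b A_{ab}\le\lceil K_{\text{leaf}}/2\rceil=H$ is a cleaner justification than the paper's combined step via $\sum_a A_{ab}+\sum_a A_{ba}\le\sum_a L_{ab}$. That step on its own does not yield the ceiling inequality: it implicitly needs the Step~1 balance bounds, which you invoke explicitly.
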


\begin{proof}
Consider a leaf-level network demand matrix \(L\), where \(L_{a,b}\) denotes the traffic requirement from the \(a\)-th leaf node to the \(b\)-th leaf node. By the theory of symmetric matrix decomposition, there exists a matrix \(A\) such that
\[
L_{a,b} = A_{a b} + A_{b a},
\]
for all leaf indices \(a, b\).

Moreover, the row and column sums of \(A\) are constrained by the following bounds:
\[
\left\lfloor \frac{\sum_{a} L_{a,b}}{2} \right\rfloor \leq \sum_{a} A_{a b} \leq \left\lceil \frac{\sum_{a} L_{a,b}}{2} \right\rceil,
\]
for every column index \(b\), and
\[
\left\lfloor \frac{\sum_{b} L_{a,b}}{2} \right\rfloor \leq \sum_{b} A_{a b} \leq \left\lceil \frac{\sum_{b} L_{a,b}}{2} \right\rceil,
\]
for every row index \(a\).  

These inequalities ensure that the aggregated incoming and outgoing components encoded in \(A\) approximate, up to a rounding of at most one unit, half of the corresponding total demand specified by \(L\).

By applying Integer Matrix Decomposition Theorem on $ A $, we can further decompose it into a three-dimensional tensor $ A_{ab}^h $ satisfying:

\begin{equation*}
    \sum_h A_{ab}^h = A_{ab}
\end{equation*}

\begin{equation*}
    \left\lfloor \frac{A_{ab}}{K_{\text{leaf}} / \tau} \right\rfloor \leq A_{ab}^h \leq \left\lceil \frac{A_{ab}}{K_{\text{leaf}} / \tau} \right\rceil
\end{equation*}

\begin{equation*}
    \left\lfloor \frac{\sum_a A_{ab}}{K_{\text{leaf}} / \tau} \right\rfloor \leq \sum_a A_{ab}^h \leq \left\lceil \frac{\sum_a A_{ab}}{K_{\text{leaf}} / \tau} \right\rceil
\end{equation*}

\begin{equation*}
    \left\lfloor \frac{\sum_b A_{ab}}{K_{\text{leaf}} / \tau} \right\rfloor \leq \sum_b A_{ab}^h \leq \left\lceil \frac{\sum_b A_{ab}}{K_{\text{leaf}} / \tau} \right\rceil
\end{equation*}

We then define $ L_{abh} $ as:

\begin{equation*}
    L_{abh} = A_{ab}^h + A_{ba}^h
\end{equation*}

To complete the proof of Theorem~\ref{lem:contention_free_tau2}, we verify that $ L_{abh} $ satisfies constraints (\ref{sym_equ1}), (\ref{sym_equ3}), and (\ref{sym_equ5}).

\textbf{Constraint (\ref{sym_equ1}) Verification:}

\begin{equation*}
    \sum_h L_{abh} = \sum_h (A_{ab}^h + A_{ba}^h) = A_{ab} + A_{ba} = L_{a,b}
\end{equation*}

Thus, constraint (\ref{sym_equ1}) is satisfied.

\textbf{Constraint (\ref{sym_equ3}) Verification:}

\begin{equation*}
    \sum_a L_{abh} = \sum_a (A_{ab}^h + A_{ba}^h)
\end{equation*}

Using the bounds derived earlier:

\begin{equation*}
    \sum_a L_{abh} \leq \left\lceil \frac{\sum_a A_{ab}}{K_{\text{leaf}} / \tau} \right\rceil + \left\lceil \frac{\sum_a A_{ba}}{K_{\text{leaf}} / \tau} \right\rceil
\end{equation*}

Since $ \sum_a A_{ab} + \sum_a A_{ba} \leq \sum_a L_{a,b} $, we can derive:

\begin{equation*}
    \sum_a L_{abh} \leq 2 \cdot \left\lceil \frac{\sum_a L_{a,b}}{2 \cdot K_{\text{leaf}} / \tau} \right\rceil
\end{equation*}

Similarly, we have:

\begin{equation*}
    \sum_b L_{abh} \leq 2 \cdot \left\lceil \frac{\sum_b L_{a,b}}{2 \cdot K_{\text{leaf}} / \tau} \right\rceil
\end{equation*}

Clearly, when $\tau = 2$, we can derive:

\begin{equation*}
    \sum_a L_{abh} \leq 2
\end{equation*}

\begin{equation*}
    \sum_b L_{abh} \leq 2
\end{equation*}

This confirms that constraint (\ref{sym_equ3}) holds when $\tau=2$.

\textbf{Constraint (\ref{sym_equ5}) Verification:}

\begin{equation*}
    \sum_a \sum_b L_{abh} = \sum_a \sum_b (A_{ab}^h + A_{ba}^h) = \sum_a \sum_b L_{bah}
\end{equation*}

Thus, constraint (\ref{sym_equ5}) is satisfied.

Since $ L_{abh} $ satisfies constraints (\ref{sym_equ1}), (\ref{sym_equ3}), and (\ref{sym_equ5}), Theorem \ref{lem:contention_free_tau2} is proved.
\end{proof}

\noindent\textbf{Remark: Handle the case when $\tau$=1} 

In certain scenarios, for instance, when a Pod has already been deployed and alterations to its topology are impractical, or when the cluster must scale to a very large size, it is still necessary to consider the setting in which the network is configured with $\tau = 1$. Under this configuration, the Heuristic-Decomposition algorithm described above can guarantee a maximum contention level of at most $2$, as Inequality~(\ref{sym_equ3}) is no longer satisfied. This upper bound of $2$ is in fact tight, as we have identified multiple instances of $L$ for which the model (\ref{sym_equ1})(\ref{sym_equ3})(\ref{sym_equ5}) admits no feasible solution when $\tau = 1$ (see Fig.~\ref{fig:logical topology design challenge}). Nonetheless, as established by Theorem~\ref{lem:contention_free_tau1}, contention-free schedules remain attainable provided that additional structural constraints are imposed on $L$.

\begin{theorem}\label{lem:contention_free_tau1}
For the special case $\tau = 1$, consider any matrix $L$ such that $\forall a,\; \sum_b L_{ab} \leq \frac{K_{\mathrm{leaf}}/\tau}{2}$ and $\forall b,\; \sum_a L_{ab} \leq \frac{K_{\mathrm{leaf}}/\tau}{2}$. Under these conditions, the model of equations \textup{(\ref{sym_equ1})}, \textup{(\ref{sym_equ3})}, and \textup{(\ref{sym_equ5})} admits a feasible solution that can be computed in time polynomial time.
\end{theorem}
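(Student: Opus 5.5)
We run the Heuristic-Decomposition algorithm of Algorithm~\ref{alg:heuristic_decomp} unchanged with $\tau=1$, so $H=K_{\mathrm{leaf}}$. We then bound the per-spine loads more carefully than in the proof of Theorem~\ref{lem:contention_free_tau2}, using the stronger hypothesis on $L$. Constraints (\ref{sym_equ1}) and (\ref{sym_equ5}) go through exactly as before. They only use $\sum_h A^{h}_{ab}=A_{ab}$, $A+A^T=L$ and the definition $L_{abh}=A^h_{ab}+A^h_{ba}$, none of which depend on $\tau$. Polynomial running time likewise follows from the complexity analysis already given, $O(\eta^6\log\eta)$. So all the work is in constraint (\ref{sym_equ3}), which now reads $\sum_b L_{abh}\le 1$ and $\sum_a L_{abh}\le 1$.

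\textbf{Step 1.} Apply Theorem~\ref{lem:matrix_decomp}. With row sums $r_a=\sum_b L_{ab}\le K_{\mathrm{leaf}}/2$, we get $\sum_b A_{ab}\le\lceil r_a/2\rceil$ and $\sum_b A_{ba}\le\lceil r_a/2\rceil$, using the symmetry of $L$ for the column statement. The ceilings are the difficulty. We cannot simply use $\lceil\cdot\rceil\le 1$ for each of the two terms of $L_{abh}$, since that only gives a load of $2$, exactly the bound in the Remark.

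\textbf{Step 2.} Instead of splitting $L$ symmetrically and then distributing over $H$ spines, we distribute $L$ itself first. We use the hypothesis that row and column sums of $L$ are at most $H/2$. Apply Theorem~\ref{lem:minirewir} to $L$ with $H'=H/2$ parts (assume $K_{\mathrm{leaf}}$ even). Each part $L^{(g)}$ then has all row and column sums at most $\lceil r_a/(H/2)\rceil\le 1$. Each $L^{(g)}$ is therefore a partial permutation matrix, i.e.\ a matching between leaves.

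The difficulty is that $L^{(g)}$ need not be symmetric, and (\ref{sym_equ5}) requires symmetric per-spine placement. We fix this by pairing spines. The pair of spines $(2g-1,2g)$ carries $L^{(g)}$ on spine $2g-1$ and $(L^{(g)})^T$ on spine $2g$, each halved in some sense. More concretely, for each unordered leaf pair $\{a,b\}$ with $L_{ab}=m$, all $m$ units must be assigned to spines so that every leaf uses each spine at most once. A symmetric unit placed on spine $h$ occupies both the out-side and the in-side of $a$ and $b$ at $h$. So we need a proper edge colouring of the multigraph $L$, with maximum degree at most $H/2$, using $H$ colours.

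By Shannon's theorem such a multigraph is edge-colourable with $\lfloor 3\Delta/2\rfloor\le 3H/4\le H$ colours. Alternatively, we can stay within the paper's tools: use the sub-step above to produce matchings whose union is symmetric, e.g.\ apply Theorem~\ref{lem:matrix_decomp} to get $L=A+A^T$ with $A$ having row and column sums at most $\lceil r/2\rceil$. Then decompose $A$ via Theorem~\ref{lem:minirewir} into $H/2$ matchings $A^{(g)}$, and set $L_{ab,h}$ from the symmetric matrices $A^{(g)}+A^{(g)T}$, but only after splitting each such symmetric matrix (a union of paths and even cycles) into two symmetric matchings. Odd cycles are the obstruction, and this is where a third colour is needed.

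\textbf{Main obstacle.} The hardest step is exactly this conversion of the asymmetric per-spine pieces into symmetric matchings, which odd cycles obstruct. I would resolve it by invoking a polynomial-time constructive edge-colouring of the multigraph $G_L$, whose maximum degree is at most $K_{\mathrm{leaf}}/2$, using at most $K_{\mathrm{leaf}}$ colours. A greedy colouring suffices: each edge $\{a,b\}$ conflicts with at most $(\deg a-1)+(\deg b-1)\le K_{\mathrm{leaf}}-2$ other edges, so some colour among the $K_{\mathrm{leaf}}$ is always free. We then set $L_{abh}=L_{bah}=1$ for an edge coloured $h$. This makes (\ref{sym_equ1}) hold by construction, (\ref{sym_equ5}) hold by symmetry, and (\ref{sym_equ3}) hold because each colour class is a matching. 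The greedy procedure clearly runs in time polynomial in $\sum_{a,b}L_{ab}\le\eta K_{\mathrm{leaf}}$.
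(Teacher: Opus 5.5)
Your final argument is correct and is exactly the paper's greedy proof. You treat each unit of $L_{ab}$ as an edge of a multigraph of maximum degree at most $K_{\mathrm{leaf}}/2$ and give it a spine unused at both $a$ and $b$, which exists because at most $(\deg a-1)+(\deg b-1)\le K_{\mathrm{leaf}}-2$ spines are blocked. Drop the Heuristic-Decomposition, Shannon and odd-cycle detours, and especially the opening claim that Algorithm~\ref{alg:heuristic_decomp} runs unchanged: with $\tau=1$, Theorem~\ref{lem:minirewir} allows a whole directed triangle $A$ to land on one spine, which violates (\ref{sym_equ3}).
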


Theorem~\ref{lem:contention_free_tau1} demonstrates that, under $\tau = 1$, minimizing cross-pod, inter-leaf communication demand is a critical design objective for specifying Leaf-level Network Requirements. This objective can be realized through judicious GPU resource scheduling. 

To establish Theorem~\ref{lem:contention_free_tau1}, we consider a simple greedy assignment procedure. When the conditions stated in Theorem~\ref{lem:contention_free_tau1} are satisfied, we can greedily assign each leaf-level network requirement to an unoccupied spine switch. Concretely, suppose a network requirement must be satisfied between leaf switch $a$ and leaf switch $b$. If leaf switch $a$ has previously utilized at most $\frac{K_{\text{leaf}}}{2} - 1$ distinct spine switches for communication, and the same holds for leaf switch $b$, then there exist at least two spine switches that are simultaneously available to satisfy the network requirement between leaf switches $a$ and $b$. Consequently, such a greedy strategy ensures a feasible contention-free assignment. The time complexity of this greedy algorithm is $O(K_{\text{leaf}} \times \eta)$.

\end{sloppypar}

\section{Large Scale Simulation}
\begin{sloppypar}
\subsection{Simulation Setup}
\label{sim_setup}
To address the computational inefficiency of packet-level fine-grained network simulators \cite{Kassing2016Netbench}, which can take days to simulate a single ML task trace on a cluster contains 64 GPUs, we adopt RapidAISim, a coarse-grained flow-level simulator specifically tailored for OCS-based GPU clusters.


We conduct a comprehensive performance evaluation across four cluster scales (2,048-GPU, 4,096-GPU, 8,192-GPU, and 16,384-GPU) to assess scalability and efficiency. The OCS-based clusters integrate 32-port EPSes with 256-port MEMS-OCS. Our comparative evaluation includes: (1) \textbf{Leaf-centric} approach with $\tau=2$ and $\tau=1$; (2) the \textbf{Pod-centric} approach used in works like Jupiter Evolving \cite{poutievski2022jupiter,2021Gemini}, which leverages inter-Pod link demands and MIP (with the high-performance Gurobi optimization library \cite{achterberg2019s} adopted as the MIP solver) to generate logical topologies; (3) the widely adopted 3-tier Clos architecture without oversubscription \cite{hu2021characterization,mahajan2020themis,gao2024crescent,qian2024alibaba} (equipped with common EPSes using Broadcom BCM56980 \cite{BroadcomBCM56980DS} switch chip which provides 12.8 Tbps switch capacity); and (4) A classic OCS-based cluster Helios \cite{farrington2010helios} which utilizes bipartite graph matching of traffic features for ToE. We make the following hardware assumptions: eight GPUs per server are interconnected via an intra-node fabric with 400 Gbps aggregate bandwidth, and inter-node GPU communication defaults to a 200 Gbps RDMA network \cite{eshelman2020dgx,GoogleA3Supercomputers}.

We generate 1000 ML tasks based on the SenseTime dataset \cite{hu2021characterization}, with scheduling constraints: Tensor Parallelism (TP) traffic is confined to a single server, and Expert Parallelism (EP) traffic is restricted within a single Pod. Task arrival intervals follow a Poisson distribution, with adjustments to ensure comparable workloads across architectures. We quantify cluster load using the \emph{workload level}, calculated via Equation (\ref{sim:1}), where $\lambda_k$ denotes the arrival rate of jobs requiring $k$ GPUs and $T_k$ represents their average runtime. Notably, $k \times \lambda_k \times T_k$ corresponds to the expected GPU time occupied by jobs requiring $k$ GPUs. We set the workload level to 0.767 by default. Equal-cost multi-path routing (ECMP) \cite{dhaliwal2018load} is adopted as the default load-balancing strategy. We adopt the standard MurmurHash3 \cite{senumammh3} algorithm as the hash function, and use the five-tuple (source IP address, destination IP address, source port, destination port, and transport layer protocol) as the hash factor.

  \begin{equation}\label{sim:1}
    \mbox{Workload-level} = \textstyle \sum_k \frac{k*\lambda_k*T_k}{GPU_{Num}} 
\end{equation}

\begin{figure}[t]
    \centering
    \includegraphics[width=1\linewidth]{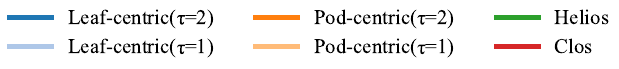}
    \begin{subfigure}[b]{0.49\linewidth}
     \includegraphics[width=\linewidth]{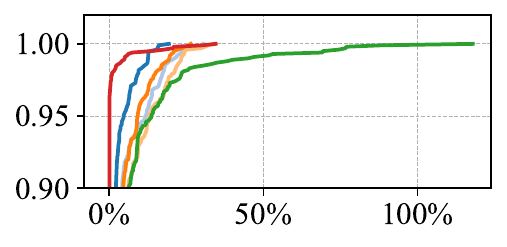}
     \caption{CDF of JRT slow down ratio compared to \emph{Best}}\label{case1}
    \end{subfigure}
    \begin{subfigure}[b]{0.49\linewidth}
        \centering  
    
        \includegraphics[width=\textwidth]{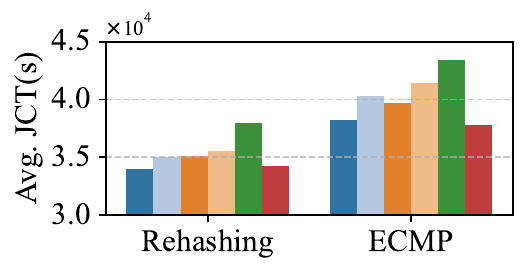}
    
        \caption{Performance under different load balance strategies}\label{fig:loadbalance}
    \end{subfigure}
    \begin{subfigure}[b]{0.49\linewidth}
     \includegraphics[width=\linewidth]{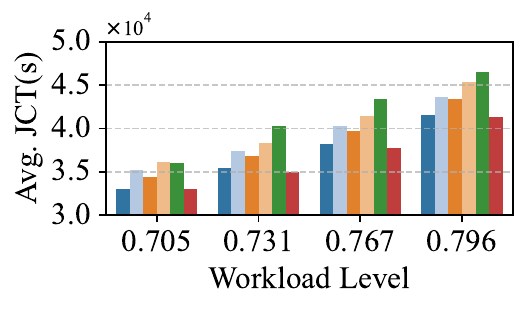}
     \caption{Avg.JCT under different workload-level.}\label{diff_jct_workload}
    \end{subfigure}
    \begin{subfigure}[b]{0.49\linewidth}
     \includegraphics[width=0.99\linewidth]
     {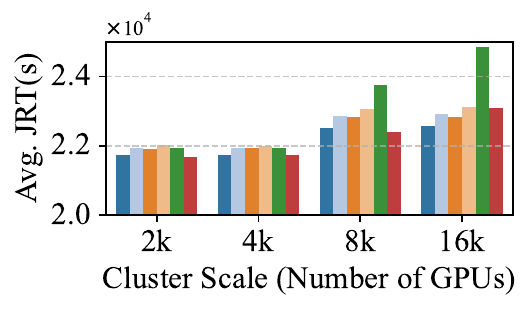}
    \caption{Avg.JRT under different cluster size.}\label{Avg.Bandwidth_vs_Node_Number}
    \end{subfigure}

\caption{The comparative performance of the evaluated strategies highlights the critical role of leaf-centric logical topology design and intra-Pod physical topology design. Unless otherwise specified, the workload level is fixed at 0.767, and the cluster configuration comprises 8,192 GPUs by default.}
\end{figure}

Performance evaluation is conducted through three key metrics: average job runtime ($Avg. JRT$), and average job completion time ($Avg. JCT$), where for each job $JRT = T_f-T_s$, and $JCT = T_f-T_a$ with $T_a$, $T_s$, and $T_f$ representing the job's arrival time, initiation time, and termination time respectively. To establish a theoretical upper bound for cluster performance assessment, we propose a hypothetical network topology featuring an idealized spine switch with unlimited port capacity, directly interconnecting all leaves under the \textbf{Best} architecture. The JRT values achieved in this optimal scenario are denoted as $JRT^{Best}$ for comparative analysis, while the performance degradation ratios are calculated as $\frac{JRT-JRT^{Best}}{JRT^{Best}}$ for the slow down ratio of JRT of each task.

\subsection{Performance Analysis}

Fig.~\ref{case1} presents the CDF of the JRT slowdown ratio across different strategies for an 8k-scale GPU cluster. Notably, the \textbf{Leaf-centric strategy with $\tau = 2$} outperforms all other approaches: compared to the Pod-centric strategy ($\tau = 2$), it achieves a maximum JRT reduction of 19.27\%, with 4\% of jobs experiencing a substantial JRT improvement (exceeding 5\%). Given that most jobs in the public dataset \cite{hu2021characterization} are small and do \emph{not involve} cross-Pod communication, we further focus on large jobs requiring \textbf{cross-Pod communication}: for this subset, the average JRT is reduced by 2.34\%, and 16.67\% of jobs achieve a significant JRT reduction (exceeding 5\%). This underscores that the Leaf-centric strategy effectively avoid routing polarization.

We further compare the Leaf-centric strategy under different intra-Pod configurations ($\tau = 1$ vs. $\tau = 2$). For jobs with cross-Pod communication, Leaf-centric ($\tau = 2$) achieves a maximum JRT reduction of 13.98\% relative to Leaf-centric ($\tau = 1$), with 19.44\% of tasks experiencing a JRT reduction exceeding 5\%. This demonstrates that intra-Pod architectural design exerts a significant impact on overall cluster performance, when $\tau = 1$, constructing an effective leaf-centric logical topology may be infeasible. It is worth noting that for suboptimal physical topology designs with $\tau=1$, while the Leaf-centric strategy still provides marginal benefits, traffic contention remains a critical issue, this further emphasizes the paramount importance of rational physical topology design.

\emph{Helios} \cite{farrington2010helios} exhibits suboptimal performance in large-scale clusters or at high workload levels, a phenomenon attributable to the increased complexity of logical topology computation under intricate traffic characteristics, simple greedy-based strategies thus struggle to cope with such scenarios effectively. Furthermore,  Leaf-centric ($\tau = 2$) delivers performance comparable to the \textbf{Clos} fabric while incurring substantially lower costs. The long-tail effect of JRT in Leaf-centric ($\tau = 2$) is significantly shorter, primarily because the Leaf-centric ($\tau = 2$) architecture only employs two tiers of EPSes, whereas the Clos fabric incorporates more tiers, thus incurring fewer hash function invocations for the former. Conversely, a greater number of hash tiers can induce more severe \textbf{Hash Polarization} \cite{qian2024alibaba}.

As illustrated in Fig.~\ref{diff_jct_workload}, the proposed strategy yields even more pronounced improvements in average Avg. JCT compared to Avg. JRT. This aligns with queueing theory \cite{adan2002queueing}, which posits that reductions in task queuing time can be substantially larger than those in task running time.

\subsection{Robustness Analysis}

\textbf{Performance under different load balancing strategies:} Traffic load balancing strategies exert a profound impact on traffic contention caused by hash polarization. Beyond the widely adopted ECMP strategy, we additionally incorporate a rehash-based ECMP scheme (implemented in Alibaba’s ACCL library \cite{dong2021accl}, a strategy that mitigates the \textbf{Hash Polarization} problem by performing multiple rounds of hashing and selecting the least congested path, denoted as \emph{Rehashing}). As shown in Fig.~\ref{fig:loadbalance}, more effective load balancing strategies yield reductions in average JRT across all approaches; however, the Leaf-centric method with $\tau = 2$ remains superior to the other other OCS-based cluster designs.

\textbf{Performance under different workload-level:} Workload exerts a notable impact on the performance of different strategies. To quantify this effect, we generate distinct datasets by adjusting $\lambda_k$ (the arrival rate of jobs requiring $k$ GPUs). As shown in Fig.~\ref{diff_jct_workload}, the {Leaf-centric strategy with $\tau = 2$} consistently outperforms all other OCS-based cluster designs across datasets with varying workload levels.





\textbf{Performance across clusters of different scales:} In industrial settings, cluster sizes often fluctuate over time in response to evolving workload demands \cite{zhao2019minimal}. Assessing the scalability of our strategy across varying cluster sizes is therefore critical. The results presented in Fig.~\ref{Avg.Bandwidth_vs_Node_Number} demonstrate that our strategy sustains a consistent performance advantage across compared with other OCS-based cluster designs diverse cluster scales.

\subsection{Availability Analysis}
\begin{figure}[htbp]
    \centering 

    \includegraphics[width=0.5\textwidth]{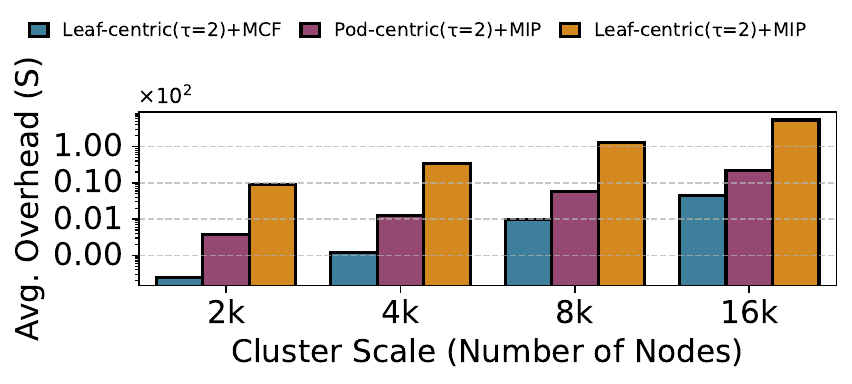}

    \caption{Comparative analysis of average time overhead in logical topology design}

    \label{fig:overhead_scale}
\end{figure}

Unlike traditional DCN workloads, ML workloads often require job-level reconfiguration \cite{wang2022topoopt}, resulting in JCT being directly tied to the overhead of logical topology computation. As shown in Fig.~\ref{fig:overhead_scale}, we evaluate and compare the computational overhead of logical topology derivation for our proposed LumosCore, alongside the MIP-based Pod-centric and leaf-centric approaches, across different cluster scales. When the cluster size reaches 16K, the MIP-based leaf-centric logical topology approach incurs an average computation time of 541.76 seconds, whereas our approach incurs merely 4.57 seconds on average. This 99.16\% reduction in computational overhead highlights the efficiency and efficacy of our algorithmic design. Our proposed logical topology design algorithm achieves a 6.26\% reduction in average JCT for an 8k-scale cluster, in comparison with the MIP-based leaf-centric logical topology design approach.

\section{Related Work}
\subsection{Handle the hash polarization problem}

\textbf{Hash Polarization} mitigation has emerged as a prominent research direction in GPU cluster network design, with numerous works addressing this challenge through diverse technical avenues. For instance, vClos \cite{han2023isolated,han2025vclos} mitigates communication bottlenecks by analyzing the traffic characteristics of communication primitives and redesigning the underlying algorithms accordingly. Alibaba’s HPN and other works \cite{qian2024alibaba,wang2024railonlylowcosthighperformancenetwork} minimizes cross-leaf traffic demand via a rail-optimized architectural design. ACCL \cite{dong2021accl} tackles traffic contention from the load-balancing perspective, leveraging multi-choice hashing techniques to select the least congested paths dynamically. Our work, by contrast, focuses on mitigating the \textbf{Routing Polarization} problem specifically induced by OCS deployments in GPU clusters. This unique focus makes our approach complementary to existing solutions, and it can be seamlessly integrated with these methods handling hash polarization to further enhance network performance.

\subsection{Designing of Current OCS-based GPU Cluster}

The first large-scale commercial deployment of MEMS-OCS in production clusters was documented in Google’s MinRewiring study, published in 2019 \cite{zhao2019minimal}. The primary objective of MinRewiring was to reduce the rewiring overhead associated with dynamic cluster scaling by exploiting the reconfigurability of MEMS-OCS. Building on this line of work, Google’s Jupiter Evolving architecture, presented in 2022 \cite{poutievski2022jupiter}, introduced a three-tier OCS-based topology. This design satisfies the traffic requirements of DCN workloads through traffic-aware logical topology construction and systematic topology engineering. With the rapid emergence of AI-oriented clusters, Google further employed similar architectures in its A3 and A4 \cite{GoogleA3Supercomputers,google-ai-hypercomputer-schedule-gke,google-kubernetes-ai-infra-integration} GPU clusters, leveraging the reconfigurable properties of OCS to accommodate the communication patterns and bandwidth demands of ML workloads. Simultaneously, companies such as NVIDIA \cite{Patronas:25} and Huawei \cite{huawei-optical-switch-intelligent-computing-2025} have also initiated research and development of GPU clusters optimized for ML workloads using analogous OCS-based architectural principles. 

Prior studies, including TPUv4 \cite{2023TPU}, TopoOpt \cite{wang2022topoopt}, MixNet \cite{Liao_2025}, and InfiniteHBD \cite{shou2025infinitehbdbuildingdatacenterscalehighbandwidth}, also investigate the design of OCS-based GPU cluster architectures, yet they primarily focus on systems with a limited number of network layers. Consequently, these works do not encounter the routing polarization issue that emerges in the three or more tier OCS-based GPU clusters, the primary focus of this study. However, when these approaches are integrated with RDMA (Remote Direct Memory Access)-enabled networks \cite{qian2024alibaba} for cluster scaling (e.g., as exemplified by Google’s presentation at OFC 2025 \cite{ofc2025_m4h6}), these issues become non-negligible and thus demand careful consideration.

While companies such as Google \cite{GoogleA3Supercomputers,huawei_2024_all_optical_switch,Patronas:25,google-ai-hypercomputer-schedule-gke,huawei-optical-switch-intelligent-computing-2025} have designed and even commercially deployed the three-tier OCS-based GPU clusters, they have \textbf{not mentioned} the problem of the routing polarization issue that this paper seeks to resolve. In this paper, we introduce a leaf-centric logical topology design methodology for OCS-based clusters. The proposed approach is specifically tailored to alleviate traffic contention on leaf-to-spine links, thereby improving the overall performance and scalability of OCS-based deployments.

Works like Interleaved Wiring \cite{zhao2021understanding,han2024lumoscore} have investigated the design of inter-Pod physical topologies; however, the design of intra-Pod physical topologies and strategies to mitigate routing polarization remain open for further investigation. Our work can be integrated with these studies to guide the deployment of OCS-based GPU clusters.

\subsection{The traffic pattern of ML workloads}\label{traffic_pattern}
According to publicly available information, Google has adopted OCS in three-tier OCS-based GPU clusters \cite{GoogleA3Supercomputers}. In contrast to DCN workloads, AI workloads are typically characterized as highly skewed yet structurally regular due to the design of their underlying communication algorithms~\cite{wang2022topoopt,Patronas:25}. This inherent regularity provides an opportunity to engineer OCS-based GPU clusters in a principled manner \cite{Liao_2025,Patronas:25}.

A representative ML training workload can be abstracted under the Tensor Parallelism–Pipeline Parallelism–Data Parallelism–Expert Parallelism (TP–PP–DP–EP) paradigm as Fig.\ref{fig:megatron} shows \cite{shoeybi2019megatron}. Among these, Tensor Parallelism (TP) is usually confined within a single server (or node) because of its extremely high communication volume, which makes intra-server communication preferable~\cite{DGX2023}. Expert Parallelism (EP) is commonly designed to be locality-aware and can be decomposed into multiple relatively independent communication domains~\cite{Liao_2025}. Although the all-to-all traffic induced by Mixture-of-Experts (MoE) layers in EP does not strictly follow a perfectly regular pattern~\cite{Liao_2025}, EP communication domains are typically small. Therefore, in works such as MixNet \cite{Liao_2025}, OCS reconfiguration can be exploited to accommodate the communication requirements of the EP communication domain.

\begin{figure}
    \centering
    \includegraphics[width=1\linewidth]{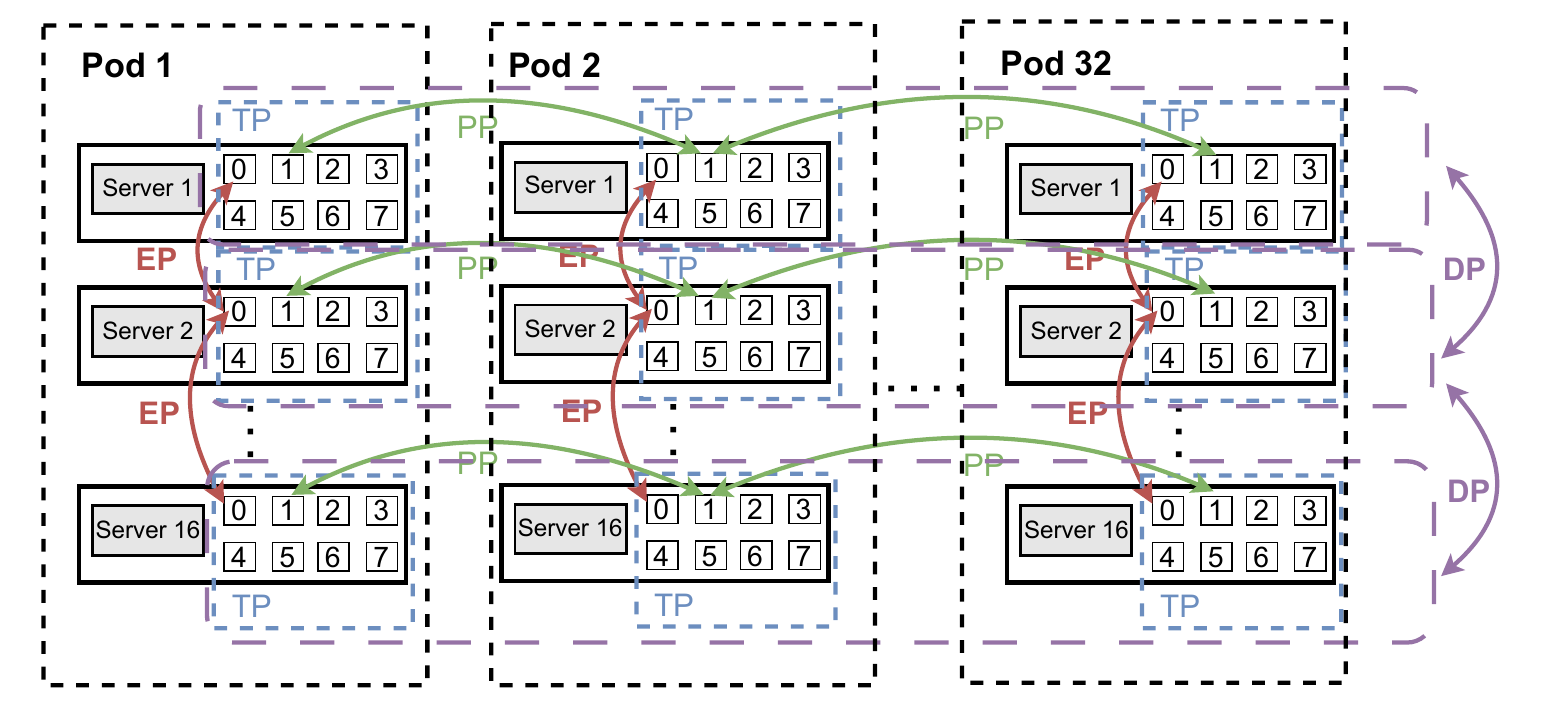}
    \caption{Communication Pattern for \textbf{Megatron} \cite{shoeybi2019megatron} with 4096 GPUs}
    \label{fig:megatron}
\end{figure}

Data Parallelism (DP) and Pipeline Parallelism (PP)~\cite{shoeybi2019megatron} communications typically exhibit stronger and more predictable regularity. This regularity enables the construction of suitable leaf-level network requirement for the corresponding communication domains prior to job execution~\cite{shou2025infinitehbdbuildingdatacenterscalehighbandwidth,wang2022topoopt}.Overall, informed by existing industrial deployments such as those of Google, we infer that designing OCS-based GPU clusters is a technically viable and practically deployable solution.

\section{Discussion}
\noindent\textbf{Generating \emph{Leaf-level Network Requirement}:}
While this paper primarily focuses on the details of logical topology design, we have not elaborated on the generating of the \emph{Leaf-level Network Requirement Matrix}. In fact, to minimize inter-leaf link demands as much as possible, it is essential to develop sophisticated communication algorithms and communication domain orchestration mechanisms. These critical components will be thoroughly detailed in our future work, where we will systematically address their design principles, implementation frameworks, and performance implications for large-scale GPU cluster architectures.

\section{Conclusion}

In this paper, we investigate logical topology design strategies for OCS-based GPU clusters and analyze how to design the intra-pod physical topology. Focusing on the previously overlooked yet inevitable issue of routing polarization in three-tier OCS-based GPU clusters, we develop a polynomial-time logical topology design strategy based on theoretical analysis. Furthermore, we propose an intra-pod physical topology design methodology that guarantees, for any given leaf-level traffic matrix, the computation in polynomial time of a logical topology that avoids these routing polarization issue.

\end{sloppypar}

\bibliographystyle{IEEEtran}
\bibliography{bibliography.bib}
\clearpage

\end{document}